\newcommand{\minimize}{\operatornamewithlimits{minimize}}
\newtheorem{theorem}{Theorem}[section]
\newtheorem{prop}[theorem]{Proposition}
\newtheorem{lemma}[theorem]{Lemma}
\newcommand{\post}[2]{\begin{center} \includegraphics[width=#2cm]{#1} \end{center} }
\newcommand{\bs}[1]{\boldsymbol #1}													
\newcommand{\mc}[1]{\ensuremath{\mathcal{#1}}}	
\newcommand{\mb}[1]{\ensuremath{\mathbf{#1}}}
\newcommand{\beq}[1]{\begin{equation} \label{eq:#1}}
\newcommand{\eeq}{\end{equation}}
\newcommand{\beqn}{\begin{equation*}}
\newcommand{\eeqn}{\end{equation*}}
\newcommand{\indicator}[1]{\textbf{1}_{\left\{#1\right\}}} 	
\newcommand{\wh}[1]{\widehat{#1}}														
\newcommand{\wt}[1]{\widetilde{#1}}
\begin{document}
%
\title{On the Incentive to Deviate in Core Selecting \\Combinatorial Auctions
 }

\author{\IEEEauthorblockN{Vineet Abhishek and Bruce Hajek
\thanks{This work was supported by the National Science Foundation
under Grant NSF ECCS 10-28464.}}\\
\IEEEauthorblockA{Department of Electrical and Computer Engineering\\
and the Coordinated Science Laboratory  \\
University of Illinois, 
Urbana, IL,  USA\\
Email: \{abhishe1,b-hajek\}@illinois.edu}
}
\maketitle
\date{\today}

\begin{abstract}
Recent spectrum auctions in the United Kingdom, and some proposals
for future auctions of spectrum in the United States, are based on
preliminary price discovery rounds,  followed by calculation of
final prices for the winning buyers.   For example, the prices could
be the projection of Vikrey prices onto the core of reported prices.
The use of Vikrey prices should lead to more straightforward bidding,
but the projection reverses some of the incentive for bidders to report
truthfully.  Still, we conjecture that the price paid by a winning
buyer increases no faster than the bid, as in a first price auction. 
It would be rather disturbing if the conjecture is false.   The
conjecture is established for a buyer interacting with disjoint groups of
other buyers in a star network setting.   It is also shown that for any
core-selecting payment rule and any integer w greater than or equal to
two, there is a market setting with w winning buyers such that the
price paid by some winning buyer increases at least (1-1/w) times
as fast as the price bid.
\end{abstract}

\section{Introduction} \label{sec:intro}
A practical auction design must satisfy a variety of constraints resulting from legal or political policies; the resulting allocation and payments must be \textit{fair} in some sense. However, the central approach in auction theory has been to first enforce truthful reporting of the information privately held by the agents and then try to achieve the objectives of revenue or social welfare maximization. Consequently, the payment rules suggested by some classical auction mechanisms are deemed unfair. For example, for a Vikrey auction under complementarity, the winning buyers may end up paying a price lower than the offers made by the losing buyers \cite{AusubelMilgrom06}. \textit{Core-selecting auctions} (henceforth, CSAs) are a way to alleviate problems such as unfair pricing or low revenue \cite{DayMilgrom08}. In a CSA, there is no group of losing buyers whose original bids could be combined with the payments of some subset of the winning buyers to increase the revenue of the seller. Such outcome is competitive and is said to be in \textit{core}. CSAs have been taken seriously by policymakers, e.g., in the recent spectrum auctions in the UK \cite{Cramton12}.

In general, it is impossible to design an auction that satisfies both truthful reporting of the preferences and the outcome in core \cite{GoereeLien09}. In a CSA, the outcome is in core with respect to the reported preferences and not necessarily with respect to the true preferences. Furthermore, a core outcome does not uniquely determine the payments of the winning buyers.  Hence, a core selecting auction must specify a rule for deciding the payments of the winning buyers; see \cite{DayCramton11}, \cite{ErdilKlemperer10}, and \cite{AusubelBaranov10} for some simple core-selecting payment rules. The choice of the payment rule affects the bidding strategy and the incentives for truthful reporting of the preferences (see, e.g., \cite{AusubelBaranov10}).

This paper analyzes the incentive to deviate from truthful bidding for CSAs. The metric we use is the \textit{marginal incentive to deviate} (henceforth, MID). The MID measures how much a buyer's payment increases per unit increase in his bid for a small increase in the bid, keeping the bids of others constant. This metric is inspired by \cite{ErdilKlemperer10}; as argued there, buyers are unlikely to find the best possible deviation from truthful bidding, but may have a clearer view of where and how to gain from smaller deviations. We, however, consider the MID for each buyer rather than the sum of MIDs across the buyers. Moreover, the analysis in \cite{ErdilKlemperer10} is for a simple example; we consider a benchmark class of combinatorial auctions with a \textit{single-parameter} environment and a \textit{star-network} setting (to be made precise in Section \ref{sec:star}). We treat MID smaller than one as a basic requirement for any reasonable payment rule; however, there are no results in the existing literature on the same. Our particular focus is on the payment rule that minimizes the Euclidean distance between the Vikrey price vector and the set of core payment vectors (referred to as the \textit{quadratic payment rule}); and a variant of it where among the core payment vectors with the minimum sum (referred to as the \textit{minimum revenue core} (MRC)), the one closest to the Vikrey payment vector is selected (see \cite{DayCramton11} for further details). Our main results are:
\begin{enumerate}
\item
We show for the star network setting that the MID for the quadratic payment rule and its MRC variant is at most one. We conjecture that the MID smaller than one is true in general for the quadratic payment rule. Our results provide strong theoretical evidence in support of this conjecture.
\item
We show that the worst case MID for any core-selecting payment rule over all environments when there are $w$ winners
is at least $1-\frac{1}{w}.$    This quantifies the loss in the incentives for truthful bidding if core-selecting outcome is imposed as a constraint.
\end{enumerate}

The  rest of the paper is organized as follows. Section \ref{sec:model} gives notation and preliminaries. Section \ref{sec:star} describes the star network setting and presents the bounds on the MID. Section \ref{sec:mid-general} presents a lower bound on the worst-case
MID for a general core-selecting payment rule.  We conclude in Section \ref{sec:conclusions} with some directions for future research.

\section{Model and Preliminaries} \label{sec:model}
Consider an auction with multiple buyers, multiple items to be sold, and a single seller.  
Let  $M$ denote the set of items to be sold and $N$ denote the set of buyers. We limit discussion in this paper to the case of single-parameter buyers, meaning that each buyer $j$ submits only a single bid $b_j$ for a particular bundle $S_j,$  with $S_j \subseteq M.$ A given buyer $j$ will either be allocated the set $S_j$ of items, or will be allocated no items.

A {\em coalition} of buyers is simply a subset $T$ of $M.$ A coalition $T$ is called {\em feasible} if the bundles in $T$ are mutually disjoint; i.e., $S_j \cap S_k = \emptyset$ for all $i, j \in T$ such that $i \neq j$. Let $\mc{T}$ denote the set of feasible coalitions.

The outcome of the auction is a pair $(W,\mb{p})$. Here, $W$ is the set of winners (hence, a feasible coalition) and $\mb{p} \triangleq (p_j : j \in W)$ is a vector of prices; buyer $j$ pays the seller $p_j$ for the bundle of items $S_j.$ The outcome is said to be \textit{efficient} if the winning coalition $W \in \mc{T}$ is a solution of the following {\em winner determination problem}:
\beq{eff-alloc}
\sum_{j \in W} b_j = \max_{T \in \mc{T}}  \sum_{j\in T}  b_j.
\eeq

The Vikrey price vector $\mb{v}$ for an efficient outcome $W$ is the vector $\mb{v} \triangleq (v_j : j \in W)$, defined by:
\beq{vcg-price}
v_j  = \left\{\max_{T \in \mc{T}: j \not\in T}  \sum_{i\in T}  b_i \right\}  - \sum_{i\in W-j}  b_i. 
\eeq
In rest of this paper, the Vikrey auction is denoted by the outcome $(W,\mb{v}).$

The following example illustrates the motivation behind core selecting auctions. There are two small buyers, one big buyer, and two items. Each small buyer wants a single item and has value $\$8$. The big buyer wants both items and has value $\$10$. Under the Vikrey price vector,
the two small buyers win an item each at price $\$2$. The seller's total revenue is $\$4$ which is smaller than the bid of the big buyer. The big buyer might consider this outcome unfair.

An outcome $(W,\mb{p})$ is said to be {\em blocked} by a coalition $C$ if
\beqn
 \sum_{i\in W} p_i   <   \sum_{i\in C}  (b_i \indicator{ i\not\in W }  + p_i  \indicator{ i \in W}) ,
\eeqn
which means that the seller could raise more revenue by switching the set of winners from $W$ to $C,$
and charging $b_i$ to those buyers in $C/W,$  and keeping the prices equal to $p_i$ for buyers $i$ in $C\cap W.$ Equivalently, the outcome $(W,\mb{p})$ is blocked by a coalition $C$ if
$
\sum_{i\in W / C} p_i   <    \sum_{j \in C/W} b_j.
$
An outcome $(W,\mb{p})$ is said to satisfy the coalition core constraints if there are no blocking coalitions,
or equivalently, if
$
\sum_{i\in W / C} p_i   \geq     \sum_{j\in C/W} b_j 
$
for all feasible coalitions $C.$  Note that if $C$ has the form $C=W-j$ for some $j\in W,$  then the
constraint for $C$ becomes $p_j \geq 0.$ The overall core region is defined by the coalition core constraints
 and \textit{individual rationality} (IR) constraints, $p_i \leq b_i$ for all $i \in W$.  
 
The coalition core constraints can be written as $\mb{Ap} \geq \bs{\beta}.$ Here, $\mb{A}$ is the matrix such that for each coalition $C$, there is a row of $\mb{A}$ that is the binary indicator vector for the set $W / C$; the corresponding coordinate of the column vector $\bs{\beta}$ is:
$
\beta_C = \sum_{i\in C / W} b_i.
$
In summary, the core region for the reported bid vector $\mb{b}$ and winner set $W$ is the set of price
vectors $\mb{p}$ satisfying $\mb{Ap} \geq \bs{\beta}$ and $\mb{p} \leq \mb{b}.$ The core region includes the vector formed by the bids of the winning buyers.

For a price vector $\mb{r},$ the quadratic rule for payment determination \cite{DayCramton11} with reference price vector $\mb{r}$ is: 
\begin{center}
\parbox{3.5in}{
{\em  QUADRATIC($\mb{A},\mb{b},\mb{r},\bs{\beta}$):}
\beqn
\minimize_{\mb{p}} (\mb{p}-\mb{r})^T(\mb{p}-\mb{r}),
\eeqn
subject to:  $\mb{Ap} \geq \bs{\beta}$ and 
$\mb{p} \leq \mb{b}.$}
\end{center} \vspace{0.05in}

We focus on the auction mechanism that uses the quadratic rule for payment with the Vikrey price vector as the reference price vector. The Lagrangian function for the optimization problem QUADRATIC with the Vikrey price vector as the reference price vector is given by:
\begin{eqnarray*}
\lefteqn{ L(\mb{p},\bs{\lambda}, \bs{\mu}) }&& \\
&=& (\mb{p}-\mb{v})^T(\mb{p}-\mb{v})  + \bs{\lambda}^T (\bs{\beta} -\mb{Ap}) + \bs{\mu}^T (\mb{p}-\mb{b}),
\end{eqnarray*}
where $\bs{\lambda}$ is the vector of Lagrange multipliers for the constraint $\mb{Ap} \geq \bs{\beta}$ and
$\bs{\mu}$ is the vector of Lagrange multipliers for the constraint $\mb{p}\leq \mb{b}.$ Since the objective function in QUADRATIC is strictly convex and the constraints are linear, there exists a unique solution and
there exists corresponding values of the Lagrange multipliers satisfying the Karush-Kuhn-Tucker conditions.
The result is that $\mb{p}$ is the solution of QUADRATIC($\mb{A},\mb{b},\mb{v},\bs{\beta}$) if and only if there exist values of the vectors $\bs{\lambda}$ and $\bs{\mu}$ satisfying:
\begin{eqnarray*}
\mb{Ap} & \geq & \bs{\beta},  \\
\mb{p}  &\leq & \mb{b},  \\
\bs{\lambda}  &\geq & 0,  \\
\bs{\mu} & \geq  & 0, \\
\mb{p} & =  & \mb{v} + \mb{A}^T \bs{\lambda} - \bs{\mu},  \\
\bs{\lambda}^T(\bs{\beta}-\mb{Ap}) & = &  0,  \\
\bs{\mu}^T (\mb{p}-\mb{b}) & = & 0.
\end{eqnarray*}


\section{Star Network Setting} \label{sec:star}
This section describes the star network setting that we consider and obtains bounds on the MID.

\subsection{Quadratic payment rule and the MID} \label{sec:qpr}
Suppose there are $1+\sum_{j=1}^J n_j$  items,  for some positive integers $J, n_1, \ldots , n_J,$
labeled as follows. There is an item labeled zero, and for $1 \leq j \leq J$ and $1 \leq k \leq n_j$ there is an item labeled  $(j,k).$   Define a buyer to be a single-item buyer if the bundle the buyer is bidding for contains exactly one item. Suppose there are  $2(1+\sum_{j=1}^J n_j)+J$ buyers, with corresponding bundles described as follows. For each item there are two single-item buyers with desired bundle consisting of exactly that item. Such single-item buyers comprise $2(1+\sum_{j=1}^J n_j)$ of the buyers. In addition, there are $J$ buyers, indexed by $1 \leq j \leq J,$ such that $S_j = \{ 0 , (j,1), \ldots  , (j,n_j)\}.$
 
Suppose the set of winning buyers $W$ includes exactly one of the single-item buyers for each of the items.   Thus, there are $1+\sum_{j=1}^J n_j$ winning buyers and $1+J+\sum_{j=1}^J n_j$ losing buyers. We focus on the winning buyers (i.e. those in $W$) and refer to them by the item that they bid for -- thus, the winning buyers consist of buyer zero and buyers $(j,k)$ for $1\leq j \leq J$ and $1\leq k \leq n_j.$

Suppose  values of the following variables are fixed:
\begin{itemize}
\item  $b_{j,k}$ denotes the bid of (winning) buyer $(j,k).$ 
\item  $\underline{b}_{j,k}$ denotes the bid of the losing  single-item buyer that bid for item $(j,k).$
\item $\triangle_{j,k}=b_{j,k} -  \underline{b}_{j,k}.$   Since $W$ is assumed to be a solution of the
winner determination problem,  $\triangle_{j,k}\geq 0$ for $1\leq j \leq J$ and $1\leq k \leq n_j.$
\item   $\underline{b}_{0}$ denotes the bid of the losing single-item buyer that bid for item zero.
\item   $C_j$ denotes the (losing) bid of  buyer $j,$  for package $S_j=\{0, (j,1), \ldots   , (j,n_j) \}.$
\end{itemize}

The values of the following variables are thus also determined:
\begin{itemize} 
\item $v_0 = \max \{  \underline{b}_0 , \max_j \{  C_j - \sum_{k=1}^{n_j}  b_{j,k}  \}  \}.$   This is the minimum
value that buyer zero (i.e. the winning single-item buyer bidding for item zero) must bid in order to be a
winning buyer, as assumed.   It is also the Vikrey price of buyer zero.
\item $\eta_j = v_0 +\left( \sum_{k=1}^{n_j}  b_{j,k} \right) -  C_j  $   for $1\leq j \leq J.$   The fixed value
$\eta_j$ has the following interpretation.   If buyer zero were to submit the minimum possible winning bid,
$v_0,$  then $\eta_j$ is the minimum amount buyer $j$  would need to increase her
bid in order to be eligible for winning.  Note that $\eta_j \geq 0$ for $1\leq j \leq J$ by the choice of $v_0.$
\end{itemize}

The above variables do not include the bid, $b_0,$ of buyer zero (the winning single-item buyer that bids for
item zero) because we allow $b_0$ to vary in the range $[v_0, +\infty).$ For convenience,
we parameterize $b_0$ by  $b_0 = v_0 + \theta,$ and we consider values of $\theta$ with $\theta \geq 0.$
Note that the Vikrey price for buyer zero, $v_0,$  does not depend on $\theta.$ Let $p_{0,\theta}$ denote the price paid by buyer zero, for the price vector determined by the QUARDATIC rule for payment determination, with the Vikrey price vector used as the reference point.

Our main result is Proposition \ref{prop.marg_core} below; it shows that for the star network setting we consider, the MID for the quadratic payment rule is at most one. We treat MID smaller than one as a basic requirement for any reasonable payment rule. In this sense, we establish that the quadratic payment rule passes this basic check.

\begin{prop} \label{prop.marg_core}
The price for buyer zero,  $p_{0,\theta},$  is piecewise linear in $\theta$ with slope less than or equal to one for all $\theta \geq 0.$
\end{prop}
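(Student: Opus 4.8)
The plan is to use the star structure to collapse the projection defining the quadratic rule onto the single coordinate $p_0$, and then read off the slope from a parametric (active‑set) analysis. First I would make the core region explicit. The only non‑redundant coalition core constraints here are the per‑item constraints $p_0\ge\underline{b}_0$ and $p_{j,k}\ge\underline{b}_{j,k}$ (swap a winning single‑item buyer for its losing twin) and, for each $j$, the package constraint $p_0+\sum_{k=1}^{n_j}p_{j,k}\ge C_j$ (the coalition consisting of buyer $j$ together with every winner outside group $j$); any richer blocking coalition decomposes into these and is implied. Together with IR, $p_0\le v_0+\theta$ and $p_{j,k}\le b_{j,k}$, this is the polytope onto which QUADRATIC projects the Vikrey vector. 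I would then record the reference point: $v_0$ is as given and $\theta$‑independent, while evaluating \eref{vcg-price} gives $v_{j,k}=\max\{\underline{b}_{j,k},\,b_{j,k}-\eta_j-\theta\}$, so each $v_{j,k}$ lies in $[\underline{b}_{j,k},b_{j,k}]$ and is nonincreasing and $1$‑Lipschitz in $\theta$.

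The key structural point is that only the package constraints couple $p_0$ to the other coordinates, and they do so group by group. Fixing $p_0=x$ I would minimize over each group separately, defining
\[
g_j(x;\theta)=\min\Big\{\sum_k (p_{j,k}-v_{j,k})^2 : \underline{b}_{j,k}\le p_{j,k}\le b_{j,k},\ \sum_k p_{j,k}\ge C_j-x\Big\},
\]
and $h(x;\theta)=(x-v_0)^2+\sum_j g_j(x;\theta)$. Each $g_j$ is convex and nonincreasing in $x$, so $h$ is strictly convex and its unconstrained minimizer $x^\ast(\theta)$ satisfies $x^\ast\ge v_0$; since $v_0=\max\{\underline{b}_0,\max_j(C_j-\sum_k b_{j,k})\}$ makes the collapsed lower bound of the feasible interval exactly $v_0$, we get $p_{0,\theta}=\min\{x^\ast(\theta),\,v_0+\theta\}$. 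It therefore suffices to control $x^\ast$, and specifically to show that wherever the IR constraint $p_0\le v_0+\theta$ is slack the slope of $x^\ast$ is at most one (on the complementary, IR‑binding region $p_{0,\theta}=v_0+\theta$ has slope exactly one).

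For the slope I would run the standard active‑set analysis using the KKT system of the excerpt. On a cell with a fixed set $\mathcal{A}$ of active package constraints and, within each active group, a fixed set of \emph{free} coordinates (those strictly below their cap $b_{j,k}$; the lower caps never bind since $v_{j,k}\ge\underline{b}_{j,k}$ and the constraint pushes prices up), every free coordinate of group $j$ moves up by a common amount $t_j\ge 0$, and stationarity in $x$ gives $x^\ast-v_0=\sum_{j\in\mathcal{A}}t_j$. Solving these linear relations yields $p_{0,\theta}-v_0=m\theta+c$ with $m=\big(\sum_j a_j/f_j\big)\big/\big(1+\sum_j 1/f_j\big)$, where $f_j$ is the number of free coordinates in group $j$ and $a_j$ the number of them still in the branch $v_{j,k}=b_{j,k}-\eta_j-\theta$; the finitely many cells make $p_{0,\theta}$ continuous and piecewise linear. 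Since IR‑slackness means $m\theta+c\le\theta$ on a $\theta$‑interval, the bound $m\le 1$ follows immediately \emph{provided the intercept satisfies} $c\ge 0$.

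The main obstacle is exactly this nonnegativity of $c$. Writing it out, $c$ is a nonnegative combination of the per‑group quantities $(a_j-1)\eta_j+\Sigma_j^{B}$, where $\Sigma_j^{B}\ge 0$ collects the gaps $\triangle_{j,k}$ of the free coordinates already in the branch $v_{j,k}=\underline{b}_{j,k}$. For groups with $a_j\ge 1$ this is nonnegative because $\eta_j\ge 0$; the delicate case is $a_j=0$, where the term $-\eta_j+\Sigma_j^{B}$ appears to be negative. Here I would invoke the activity condition $t_j\ge 0$: for such a group the push‑up relation reads $\Sigma_j^{B}\ge\eta_j+(x^\ast-v_0)\ge\eta_j$, so the term is in fact nonnegative. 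Hence $c\ge 0$ always, and the pathological high‑slope cells are precisely those for which $c\ge 0$ and $m>1$ force $x^\ast>v_0+\theta$ — so they are never IR‑slack, and on them buyer zero's own IR binds and the slope is one. Assembling the two cases gives slope at most one on every linear piece, which is the claim; I expect the bulk of the write‑up to be the cell bookkeeping (which package constraints bind, which coordinates hit their caps, and which branch each $v_{j,k}$ occupies) and the verification of continuity across cell boundaries.
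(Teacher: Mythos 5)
Your proposal is correct and follows the same skeleton as the paper's proof: the same explicit description of the core via the per-item and per-package constraints, the same relaxation (dropping buyer zero's IR constraint so that $p_{0,\theta}=\min\{v_0+\theta,\wt{p}_{0,\theta}\}$, your $x^\ast(\theta)$ being exactly the paper's $\wt{p}_{0,\theta}$), and the same KKT/active-set analysis leading to the identical slope formula $m=\bigl(\sum_j a_j/f_j\bigr)/\bigl(1+\sum_j 1/f_j\bigr)$, which is the paper's equation for $\frac{d^+\sigma_\theta}{d\theta}$ with $a_j=|\{k:\triangle_{j,k}>\eta_j+\theta\}|$ and $f_j=|\{k:\triangle_{j,k}>\lambda_{j,\theta}\}|$. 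Where you genuinely diverge is the final inequality: the paper bounds the derivative pointwise, using a Markov-inequality estimate $a_j\le \chi_j(\theta,\lambda_{j,\theta})/(\eta_j+\theta-\lambda_{j,\theta})\le(\theta+\eta_j)/(\theta+\eta_j-\lambda_{j,\theta})$ followed by a mediant-style manipulation of the ratio, whereas you exploit linearity on each cell, write $\sigma_\theta=m\theta+c$, and show the intercept $c$ is a nonnegative combination of $(a_j-1)\eta_j+\Sigma_j^B$ (with the $a_j=0$ case rescued by the balance equation $\Sigma_j^B=\sigma_\theta+\eta_j+f_j\lambda_{j,\theta}\ge\eta_j$), so that $m\theta+c\le\theta$ with $\theta>0$ forces $m\le 1$. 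Both arguments rest on the same per-group KKT balance identity; yours is arguably cleaner algebraically and makes the IR-binding cells self-identifying ($m>1$ and $c\ge 0$ imply $x^\ast>v_0+\theta$), while the paper's derivative-level argument avoids having to track intercepts across cells and handles the exceptional set $\mathcal{E}$ of kink points explicitly. The remaining work in your write-up would indeed be the cell bookkeeping and continuity across cell boundaries, which the paper sidesteps by noting that the projection of a piecewise linear path onto a fixed polytope is piecewise linear; I see no gap.
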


The proof follows from a sequence of lemmas. We start with some notation. Let $v_{j,k,\theta}$ denote the Vikrey price of buyer $(j,k).$  It is given by:
\beqn
v_{j,k,\theta}= \max\{b_{j,k} - \eta_j - \theta , \underline{b}_{j,k} \}.
\eeqn
Equivalently,
\begin{equation} \label{eq.Vikrey_drop}
b_{j,k}-v_{j,k,\theta} = \min\{ \eta_j + \theta,  \triangle_{j,k}\}. 
\end{equation}
In words, \eqref{eq.Vikrey_drop} tells us that the price reduction in going from the bid of buyer $(j,k)$ to
the Vikrey price of buyer  $(j,k)$ is the minimum of  $\eta_j+\theta$   (insures that buyer $j$, who bid
for $S_j$, would still lose if buyer $(j,k)$ switched her bid to $v_{j,k}$)  and  $ \triangle_{j,k}$ 
(insures that the buyer bidding  $\underline{b}_{j,k}$ for item $(j,k)$ indeed loses.) The entire Vikrey price vector can thus be denoted by:
$\mb{v}_{\theta} \triangleq (v_0,[v_{j,k,\theta}]_{1\leq j \leq J, 1\leq k \leq n_j}).$

To ensure there is no blocking coalition $C$, it suffices to consider two types of coalitions $C$, one
for which the buyers bidding for individual items in $S_j$ are replaced by buyer $j$ bidding for bundle $S_j$, and one for
which a single winning buyer is replaced by the losing buyer bidding for the same item.  Therefore,
in the notation we have introduced, the core region for the star network setting is the
set of vectors $\mb{p}$ satisfying
\begin{eqnarray}
p_0 +\sum_k  p_{j,k}  \geq C_j   \label{eq.sum_cons} \\
p_{j,k}  \in [\underline{b}_{j,k}, b_{j,k} ]   \label{eq.pij_constraints} \\
p_0 \in [v_0,   v_0+ \theta ]~~  \label{eq.p0_constraint}
\end{eqnarray}
We remark that the constraints \eqref{eq.sum_cons} and \eqref{eq.pij_constraints}
imply that   $p_0 \geq  \max_j \{  C_j - \sum_{k=1}^{n_j}  b_{j,k}\},$  so 
we can use \eqref{eq.p0_constraint} instead of $p_0 \in [\underline{b}_0,v_0+\theta]$
in the description of the core region.

One of the key ideas of the proof is to consider a relaxation of the optimization problem
QUADRATIC.   Specifically, we drop
the IR constraint for buyer zero (i.e. the requirement $p_{0,\theta} \leq v_0+\theta$), while retaining
all other constraints, to obtain the
{\em expanded core} region, which is the set of $\mb{p}$ satisfying
\begin{eqnarray}
p_0 +\sum_k  p_{j,k}  \geq C_j   \label{eq.sum_con_ex} \\
p_{j,k}  \in [\underline{b}_{j,k}, b_{j,k} ]   \label{eq.pij_constraints_ex}  \\
p_0  \geq v_0   \label{eq.p0_constraint_ex}
\end{eqnarray}
Note that the expanded core region does not depend on $\theta.$
The relaxation of the optimization problem QUADRATIC that we consider is to project
the Vikrey vector $v_{\theta}$ onto the expanded core region.
Let $\wt{\mb{p}}_\theta \triangleq (\wt{p}_{0,\theta}, [\wt{p}_{j,k,\theta}]_{1\leq j \leq J, 1\leq k \leq n_j }),$ denote the resulting price vector.
That is, let $\wt{\mb{p}}_\theta$ minimize $ (\mb{p}-\mb{v}_{\theta})^T(\mb{p}-\mb{v}_{\theta})$
over $\mb{p}$ subject to \eqref{eq.sum_con_ex} - \eqref{eq.p0_constraint_ex}.

Note that, on one hand, if $\wt{p}_{0,\theta} \leq v_0+\theta,$  then the IR constraint on buyer 0 is not active
for the determination of $\mb{p}_{\theta}.$   On the other hand, if $\wt{p}_{0,\theta} > v_0+\theta,$  then the IR
constraint on buyer 0 is active for the determination of $\mb{p}_{\theta},$ in which case $p_{0,\theta}=v_0+\theta.$
Therefore,  in general, $p_{0,\theta} = \min\{v_0+\theta, \wt{p}_{0,\theta}\}.$
So to prove Proposition \ref{prop.marg_core} it suffices to prove that $\wt{p}_{0,\theta}$ is piecewise linear, and, over
 the set of $\theta$ such that $\widetilde{p}_{0,\theta} \leq v_0 + \theta,$ its slope is less than or equal to one.
The remainder of this proof deals with the relaxed optimization problem.  
The relaxation leads to new values of the Lagrange multipliers, but for brevity
and by abuse of notation, we omit tilde's over the Lagrange multipliers.

If for some $j$,   $\triangle_{j,k}=0$ for all $k$, then the prices of the buyers $(j,1), \ldots  , (j,n_j)$
are frozen at the prices bid, and, after a possible adjustment to $\underline{b}_0,$ the buyers$(j,1), \ldots  , (j,n_j)$ could
be removed.   To avoid trivialities, we can therefore assume without loss of generality that
$\max_{k} \triangle_{j,k} >0$ for each $j.$  Such assumption is in force for the remainder of this section.

\begin{lemma}  \label{lemma.star_net_optimality_cond}  The following hold:\\
(a) 
For any $\theta \geq 0,$  the price vector $\widetilde{\mb{p}}_{\theta}$ and corresponding vector
of Lagrange multipliers $\boldsymbol{\lambda}_{\theta}\stackrel{\triangle}{=}
(\lambda_{1,\theta}, \lambda_{2,\theta}, \ldots  , \lambda_{J,\theta})$  are uniquely determined by the following conditions (where $j$ ranges over $1\leq j \leq J,$  and for each $j,$
$k$ ranges over $1 \leq k \leq n_j)$:
\begin{eqnarray}
\lambda_{j,\theta} & \geq & 0,
\label{eq.cond1}  \\
\widetilde{p}_{0,\theta} & = &  v_0 +  \sum_{j=1}^J  \lambda_{j,\theta},  \label{eq.cond2}    \\
\widetilde{p}_{j,k,\theta} & = &  \min\{v_{j,k,\theta}+ \lambda_{j,\theta},  b_{j,k}\}, 
\label{eq.cond3}
\end{eqnarray}
\begin{equation}
 \sum_{i=1}^J \lambda_{i,\theta} + \eta_j 
  \geq  \sum_{k=1}^{n_j} \max\{ \min\{\eta_j + \theta  , \triangle_{j,k} \} - \lambda_{j,\theta} , 0\}, 
 \label{eq.cond4}
 \end{equation}
 \begin{equation}
 \mbox{equality holds in \eqref{eq.cond4} for $j$ with}~  \lambda_{j,\theta} > 0  \label{eq.cond5}
\end{equation}  

\noindent
(b)  For each $j,$  $\lambda_{j,\theta} < \max_k \triangle_{j,k}$ for all $\theta \geq 0.$

\noindent
(c)   If $\theta > 0$, then $\eta_j + \theta > \lambda_{j,\theta}$  for $1\leq j \leq J.$ 

\noindent
(d)  The variables $\lambda_{j,\theta}$ for each $j,$   and $\sigma_{\theta},$  are piecewise linear functions of $\theta.$

\noindent
(e)  The variable $\sigma_{\theta}$ is nondecreasing in $\theta.$ 

\end{lemma}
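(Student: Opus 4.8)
The plan is to prove the five parts in the order stated, using (a) as the workhorse and treating (d) as the main obstacle. For (a), note first that the objective $(\mathbf p-\mathbf v_\theta)^T(\mathbf p-\mathbf v_\theta)$ is strictly convex, the expanded-core constraints \eqref{eq.sum_con_ex}--\eqref{eq.p0_constraint_ex} are affine, and the feasible set is nonempty (it contains the bid vector), so the projection $\widetilde{\mathbf p}_\theta$ is unique and the KKT conditions are necessary and sufficient. I would introduce a multiplier $\lambda_{j,\theta}\ge 0$ for each coupling constraint \eqref{eq.sum_con_ex}, a multiplier $\mu_{j,k}\ge 0$ for each upper bound $p_{j,k}\le b_{j,k}$, and multipliers for $p_{j,k}\ge\underline b_{j,k}$ and $p_0\ge v_0$. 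Stationarity in $p_0$ yields \eqref{eq.cond2} once one checks that the multiplier of $p_0\ge v_0$ must vanish. Stationarity in $p_{j,k}$, combined with $v_{j,k,\theta}\ge\underline b_{j,k}$ and $\lambda_{j,\theta}\ge 0$ (which leave the lower bound slack and kill its multiplier), collapses to \eqref{eq.cond3}. Substituting \eqref{eq.cond2}--\eqref{eq.cond3} into \eqref{eq.sum_con_ex}, and using \eqref{eq.Vikrey_drop} together with $\eta_j=v_0+\sum_k b_{j,k}-C_j$, turns primal feasibility into \eqref{eq.cond4} and complementary slackness into \eqref{eq.cond5}.

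For uniqueness of the multipliers, which also sets up (d) and (e), I would fix $\theta$, write $\sigma=\sum_j\lambda_j$, and introduce $f_{j,\theta}(\lambda):=\sum_k\max\{\min\{\eta_j+\theta,\triangle_{j,k}\}-\lambda,0\}$, a continuous, convex, nonincreasing function that is strictly decreasing until it reaches $0$. Conditions \eqref{eq.cond4}--\eqref{eq.cond5} say that $\lambda_j$ is the unique value with $\sigma+\eta_j\ge f_{j,\theta}(\lambda_j)$ and equality whenever $\lambda_j>0$; this exhibits $\lambda_j$ as a well-defined, nonincreasing function $\lambda_j(\sigma)$. Thus $\sigma$ must be a fixed point of the nonincreasing map $\sigma\mapsto\sum_j\lambda_j(\sigma)$, and since the identity is strictly increasing the fixed point, and hence $\boldsymbol\lambda_\theta$, is unique. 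Parts (b) and (c) are then quick contradictions. For (b), with $\triangle_{j,\ast}:=\max_k\triangle_{j,k}>0$ by the standing assumption, $\lambda_{j,\theta}\ge\triangle_{j,\ast}$ would make every term of the right side of \eqref{eq.cond4} zero, so \eqref{eq.cond5} forces $\sum_i\lambda_{i,\theta}+\eta_j=0$, impossible since $\sum_i\lambda_{i,\theta}\ge\lambda_{j,\theta}>0$. Part (c) is the same argument with $\eta_j+\theta$ in place of $\triangle_{j,\ast}$, using $\theta>0$ to guarantee $\lambda_{j,\theta}\ge\eta_j+\theta>0$.

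Part (d) is where the real work lies. I would regard $\boldsymbol\lambda_\theta$ and $\sigma_\theta:=\sum_j\lambda_{j,\theta}$ as the solution of the parametric system \eqref{eq.cond1}--\eqref{eq.cond5}, in which $\theta$ enters only through the affine quantities $\eta_j+\theta$ and through the sign tests that fix the combinatorial state: for each $(j,k)$ whether the inner minimum picks $\eta_j+\theta$ or $\triangle_{j,k}$ and whether $\widetilde p_{j,k,\theta}$ saturates at $b_{j,k}$, and for each $j$ whether $\lambda_{j,\theta}>0$. On any interval of $\theta$ on which this state is constant, the active equalities in \eqref{eq.cond4} form a linear system in the $\lambda$'s with right-hand side affine in $\theta$, so by the uniqueness from (a) the solution is affine in $\theta$ there; continuity across breakpoints is inherited from the nonexpansiveness of the projection onto the fixed expanded-core region and the continuity of $\theta\mapsto\mathbf v_\theta$. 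The delicate points are to show that the state changes only finitely often and that the relevant linear systems are nonsingular on each piece; the strict monotonicity of $f_{j,\theta}$ and the uniqueness in (a) are what rule out degeneracy.

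Finally, for (e) I would apply monotone comparative statics to the same fixed-point equation. For fixed $\sigma$, $f_{j,\theta}(\lambda)$ is nondecreasing in $\theta$ because $\min\{\eta_j+\theta,\triangle_{j,k}\}$ is, so the solution $\lambda_j(\sigma)$ of $f_{j,\theta}(\lambda_j)=\sigma+\eta_j$ is nondecreasing in $\theta$; hence $g_\theta(\sigma):=\sum_j\lambda_j(\sigma)$ is nondecreasing in $\theta$ and nonincreasing in $\sigma$. For $\theta_2>\theta_1$ this gives $g_{\theta_2}(\sigma_{\theta_1})\ge g_{\theta_1}(\sigma_{\theta_1})=\sigma_{\theta_1}$, and since $\sigma\mapsto g_{\theta_2}(\sigma)-\sigma$ is strictly decreasing and vanishes at $\sigma_{\theta_2}$, we conclude $\sigma_{\theta_2}\ge\sigma_{\theta_1}$. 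Thus $\sigma_\theta$ is nondecreasing, which via \eqref{eq.cond2} is the statement that $\widetilde p_{0,\theta}=v_0+\sigma_\theta$ rises with $\theta$.
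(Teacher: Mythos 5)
Your treatment of parts (a), (b), (c) and (e) follows essentially the same route as the paper: the same translation of the KKT conditions into \eqref{eq.cond1}--\eqref{eq.cond5}, the same reduction of uniqueness to a fixed point of the nonincreasing map $\sigma\mapsto\sum_j\phi_j(\sigma)$ (the paper's $\phi_j$ is your $\lambda_j(\sigma)$), the same two contradiction arguments for (b) and (c), and the same monotone-comparison argument for (e), which you actually spell out more carefully than the paper does. One small caveat in (a): when $\sigma+\eta_j=0$ the condition \eqref{eq.cond4}--\eqref{eq.cond5} does \emph{not} pin down $\lambda_j$ uniquely (any sufficiently large value works), so ``$\lambda_j(\sigma)$ is well-defined'' requires the convention of taking the minimal solution; the paper handles this degenerate case explicitly, and your conclusion survives because $\sigma=0$ forces every $\lambda_j=0$ anyway.

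The genuine divergence, and the one real gap, is part (d). The paper's argument is short: $\widetilde{\mb{p}}_\theta$ is the Euclidean projection of the piecewise-linear curve $\theta\mapsto\mb{v}_\theta$ onto the expanded core, a fixed polytope, hence is piecewise linear in $\theta$; then parts (b) and (c) guarantee that for each $j$ there is some $k$ with $\lambda_{j,\theta}<\min\{\eta_j+\theta,\triangle_{j,k}\}$, so \eqref{eq.b_minus_p} lets one read off $\lambda_{j,\theta}$ as a piecewise-linear function of $\theta$ and $\widetilde{p}_{j,k,\theta}$. Your route---freezing the combinatorial state of the parametric KKT system and solving an affine system on each piece---can be completed, but as written it defers exactly the two steps that constitute the proof. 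The nonsingularity claim is fillable from (b) and (c): each active $j$ has at least one non-saturated index $k$, so the frozen system has matrix $\mathrm{diag}(|K_j|)+\mb{1}\mb{1}^T$ with every $|K_j|\geq 1$, which is positive definite. The finiteness of the breakpoints is the step that looks circular as stated, since the state depends on $\lambda_{j,\theta}$, whose regularity in $\theta$ is precisely what is being proved; one would need to argue, say, that for each of the finitely many states the set of $\theta$ on which its affine candidate satisfies all the sign conditions is an interval, that these intervals cover $[0,\infty)$, and then invoke uniqueness from (a). The sentence ``the strict monotonicity of $f_{j,\theta}$ and the uniqueness in (a) are what rule out degeneracy'' is an assertion, not an argument, so either supply that argument or switch to the paper's projection-onto-a-fixed-polytope shortcut.
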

\begin{proof}
To begin we first check that \eqref{eq.cond1}-\eqref{eq.cond5} are translations of the KKT conditions for the optimization problem
defining $\wt{\mb{p}}_\theta.$   
For each $j,$   $\lambda_{j,\theta}$ is the Lagrange multiplier for the constraint  \eqref{eq.sum_con_ex}. 
For each $(j,k)$, since $v_{j,k,\theta} \geq \underline{b}_{j,k},$ the constraint $p_{j,k,\theta} \geq \underline{b}_{j,k}$
is not active.   Similarly, the constraint $p_0 \geq v_0$ is not active.   So no Lagrange multipliers are introduced for those
constraints.    We could introduce a Lagrange multiplier $\mu_{j,k}$ for each constraint $p_{j,k} \leq b_{j,k}$, and
then $\wt{p}_{j,k,\theta}=v_{j,k,\theta}+ \lambda_{j,\theta} - \mu_{j,k},$   but since this is a scalar constraint, the variable
$\mu_{j,k}$ can be eliminated, resulting in \eqref{eq.cond3}.  Since  $\wt{\mb{p}}_\theta$ satisfies the
constraint \eqref{eq.sum_con_ex},
\begin{equation} \label{eq.constraintj}
\widetilde{p}_{0,\theta}  + \sum_{k=1}^{n_j}  \widetilde{p}_{j,k,\theta}  \geq C_j,
\end{equation}
which must hold with equality if $\lambda_{j,\theta} > 0.$
 Condition \eqref{eq.cond3} can be rewritten, using \eqref{eq.Vikrey_drop}, as
\begin{eqnarray}
\lefteqn{b_{j,k} - \widetilde{p}_{j,k,\theta}  } && \nonumber  \\
&= &   \max\{ b_{j,k} -v_{j,k,\theta} -\lambda_{j,\theta}, 0 \} \nonumber \\
 & = & \max\{ \min\{  \eta_j+\theta, \triangle_{j,k}\} -\lambda_{j,\theta}, 0\}   \label{eq.b_minus_p}.
\end{eqnarray}
Equation \eqref{eq.b_minus_p} has a natural interpretation.   The left-hand side is the amount by which the final price (computed using the relaxation) for buyer $(j,k)$ is reduced from the original bid.    It is nominally smaller,  by the amount $\lambda_{j,\theta},$   than the  original  Vikrey price reduction given by \eqref{eq.Vikrey_drop}.  The term $\lambda_{j,\theta}$ indicates how much of the original price reduction buyer $(j,k)$ needs to give back to the seller to help meet the core constraint corresponding to $S_j.$  The amount given back cannot be negative, however, because of the IR constraint for buyer  $(j,k).$

For each $j,$ by \eqref{eq.cond2}, the definition of $\eta_j,$ and the constraint \eqref{eq.constraintj},
\begin{eqnarray}
\lefteqn{ \left( \sum_i \lambda_{i,\theta} \right) + \eta_j \nonumber }\\
 & = & (\widetilde{p}_{0,\theta}  -v_0 )  +  v_0 + b_{j,1} + \cdots + b_{j,n_j} - C_j  \nonumber \\  \\
& \geq & 
\begin{array}{c}
 \widetilde{p}_{0,\theta} - v_0 + v_0 + b_{j,1} + \cdots  + b_{j,n_j}   \\ 
 ~~~~ - \widetilde{p}_{0,\theta} - \widetilde{p}_{j,1,\theta} - \cdots  - p_{j,n_j,\theta} \\
 ~~~\mbox{with equality if}~ \lambda_{j,\theta} > 0  \end{array}
 \nonumber \\
& = & \sum_{k=1}^{n_j } (b_{j,k} - \widetilde{p}_{j.k.\theta})   \label{eq.lambda_eta}
\end{eqnarray}
Substituting \eqref{eq.b_minus_p} into \eqref{eq.lambda_eta} 
 yields \eqref{eq.cond4} and \eqref{eq.cond5}.
 
 Keeping in mind the interpretation of  \eqref{eq.b_minus_p}, there is also a natural interpretation of  \eqref{eq.cond4} and
 \eqref{eq.cond5}.
In order to meet the core constraint imposed by the losing buyer $j$ bidding for $S_j$,  the sum of the price reductions
for buyers $(j,1)$ through $(j,n_j)$ should be no more than the original slack $\eta_j$ in the constraint, plus the amount
that buyer zero bids above her Vikrey price.   Moreover, equality should hold if that core constraint is tight.
This concludes the proof that \eqref{eq.cond1}-\eqref{eq.cond5} are translations of the KKT conditions for the optimization problem
defining $\wt{\mb{p}}_\theta.$

Next, we prove the uniqueness part of  Lemma  \ref{lemma.star_net_optimality_cond}(a).
Let $s \geq 0$ and $l\geq 0$ be independent variables, and consider for fixed $j$ with $1\leq j \leq J$ the condition
\begin{equation}    \label{eq.opt_cond_sigma}
\begin{array}{l}
s  + \eta_j  \geq \sum_{k=1}^{n_j}   \max\{ \min\{\eta_j + \theta  , \triangle_{j,k} \} -  l , 0\} , \\ 
\mbox{with equality if}~ l   > 0.
\end{array}
\end{equation}
If  $s + \eta_j > 0$,  \eqref{eq.opt_cond_sigma} uniquely determines $l.$   If $s +\eta_j = 0$
(i.e. $s=\eta_j=0$) then the set of $l$  satisfying  \eqref{eq.opt_cond_sigma} is the interval
$[ \min \{\theta ,   \max  \{  \triangle_{j,k} : 1\leq k \leq n_j \}\} , + \infty ).$    Therefore,
\eqref{eq.opt_cond_sigma} determines a continuous,
piecewise linear nonincreasing function $\phi_j:[0,\infty)\rightarrow [0,\infty)$ so
that for each $s$, the value $l$ given by $l = \phi_j(s)$ 
satisfies \eqref{eq.opt_cond_sigma}, and no other $l$ satisfies \eqref{eq.opt_cond_sigma}
unless $s=\eta_j=0,$ in which case $l$ is the minimum solution to  \eqref{eq.opt_cond_sigma}.
An illustration of the function $\phi_j$ is shown in Figure \ref{fig.phi_j}.
\begin{figure}[htb]
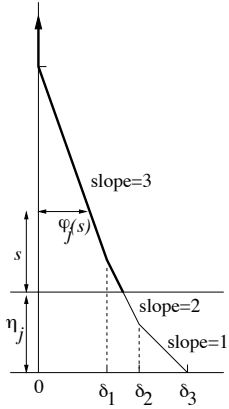
 
\post{phi_j}{3}
\caption{Illustration of a function $\phi_j,$  where $\delta_k$ in the figure
is given by $\delta_k=\min\{ \eta_j + \theta, \triangle_{j,k} \}$ for $1\leq k \leq 3.$   \label{fig.phi_j}  }
\end{figure}
Note that for $s \geq 0,$   $ \phi_j(s)=0$ if and only if $s+ \eta_j \geq  \sum_{k=1}^{n_j}  \min\{\eta_j+\theta, \triangle_{j,k} \} .$
so that $\phi_j$ is identically zero if and only if  $\eta_j \geq  \sum_{k=1}^{n_j}  \min\{\eta_j+\theta, \triangle_{j,k} \}.$

Fix any solution $\wt{\mb{p}}_{\theta}$ and $\boldsymbol{\lambda}_{\theta}$  of  \eqref{eq.cond1}-\eqref{eq.cond5},
and let  $\sigma_{\theta} = \sum_{j=1}^J \lambda_{j,\theta}.$    If  $\sigma_{\theta}>0,$
conditions \eqref{eq.cond4}  and \eqref{eq.cond5} require $\lambda_{j,\theta} = \phi_j(\sigma_\theta)$
for each $j,$  yielding that  $\sigma_\theta$ satisfies the following fixed point equation:
\begin{equation}   \label{fixed_pt_sigma}
\sigma_{\theta}  = \sum_{j=1}^J  \phi_j (\sigma_\theta)
\end{equation}
If  $\sigma_{\theta} = 0$ then $\lambda_{j,\theta}=0$ for each $j$ and  \eqref{eq.cond4} yields
that $\eta_j \geq  \sum_{k=1}^{n_j}  \min\{\eta_j+\theta, \triangle_{j,k} \}$ for all $j,$  so that
$\phi_j$ is identically zero for all $j,$ and therefore $\sigma_{\theta}$ again satisfies the fixed point
equation \eqref{fixed_pt_sigma}. 

The right-hand side of \eqref{fixed_pt_sigma} is a piecewise linear nonincreasing function
of $\sigma_\theta,$ so there is a unique solution $\sigma_\theta.$   Thus, $\sigma_{\theta}$ is uniquely
determined by \eqref{eq.cond1}-\eqref{eq.cond5}.   Furthermore, the $\lambda$'s are uniquely determined
because $\lambda_{j,\theta}=\phi_j(\sigma_{\theta}),$  and hence $\wt{p}_{\theta}$ is also uniquely
determined.
This completes the proof of  Lemma  \ref{lemma.star_net_optimality_cond}(a).

Lemma   \ref{lemma.star_net_optimality_cond}(b)  is proved by argument by
contradiction.  If the statement is false for some $j,$   then
$\lambda_{j,\theta} \geq  \max_k \triangle_{j,k}.$  In particular  (using the assumption $\max_k  \triangle_{j,k} > 0$)
$\lambda_{j,\theta}>0$ so equality holds in  \eqref{eq.cond4}.   But also, the right-hand side 
of  \eqref{eq.cond4} is equal to zero, implying that $\lambda_{j,\theta} =0,$ which is
a contradiction.   The proof of Lemma \ref{lemma.star_net_optimality_cond}(b) is complete.

Lemma   \ref{lemma.star_net_optimality_cond}(c)  is also proved by argument by
contradiction.   If   $\eta_{j_o} + \theta \leq \lambda_{j_o,\theta}$ for some $j_o$,  then the right-hand side of \eqref{eq.cond4} for $j=j_o$ is equal to zero. But also $\lambda_{j_o,\theta}> 0$, so equality must hold in \eqref{eq.cond4}, so the left-hand side of  \eqref{eq.cond4} must equal zero for $j=j_o$, which contradicts
the fact that  $\sigma_{\theta} \geq \lambda_{j_o,\theta}> 0.$  Lemma \ref{lemma.star_net_optimality_cond}(c) is proved.

The price vector $\wt{\mb{p}}_{\theta}$ is the projection of $\mb{v}_{\theta}$, which is piecewise linear with respect to
$\theta,$ onto the expanded core region, which is a closed, convex polytope region not depending on $\theta.$   Therefore
$\wt{\mb{p}}_{\theta}$ is piecewise linear with respect to $\theta.$    By parts (b) and (c) already proved, for each $j,$
there is some $k$ so that $\lambda_{j,\theta}< \min\{\eta_j+\theta, \triangle_{j,k}\}$ so that by \eqref{eq.b_minus_p}, $\lambda_{j,\theta}$
can be expressed as a piecewise linear function of $\theta$ and $\widetilde{p}_{j,k,\theta},$ so that $\lambda_{j,\theta}$ is also piecewise
linear in $\theta.$   Since sums of piecewise linear functions are piecewise linear, it
follows that $\sigma_{\theta}$ is also piecewise linear in $\theta.$
Lemma  \ref{lemma.star_net_optimality_cond}(d) is proved.

The function $\phi_j$ is nondecreasing in $\theta$ for each $j,$   which, given that
$\sigma_{\theta}$ is determined by   \eqref{fixed_pt_sigma}, establishes that $\sigma_{\theta}$ is
nondecreasing in $\theta.$   Lemma  \ref{lemma.star_net_optimality_cond}(e) is proved.

 \end{proof}

The next step of the proof of Proposition \ref{prop.marg_core} is to derive an expression for the derivative of $\sigma_{\theta}.$   By
Lemma  \ref{lemma.star_net_optimality_cond}(d), $\sigma_{\theta}$ and $\lambda_{j,\theta}$ for each
$j$ have right-hand derivatives with respect to $\theta.$   We use the notation $\frac{d^+}{d\theta}$ to  denote
right-hand differentiation.
It is simpler and sufficient to find an expression for the derivative of $\sigma_{\theta}$ that holds except for
a finite exceptional set of $\theta$ values, than to identify the right-hand derivative of $\sigma_{\theta}$ everywhere.
For $\theta\geq 0$, let $J(\theta)=\left\{ j : \frac{d^+\lambda_{j,\theta}}{d\theta} \neq 0\right\}.$

\begin{lemma} \label{lemma.Dsigma}  There is a finite set $\cal E$ (described in the proof) so that
for all $\theta \geq 0$ with $\theta \not\in {\cal E}$
\begin{equation}   \label{eq.Dsigma}
  \frac{d^+\sigma_{\theta}}{d\theta}
    =  \frac{      
    \sum_{j \in J(\theta) }   \frac{   |\{ k : \triangle_{j,k} > \eta_j  + \theta \}|      }{     |\{ k : \triangle_{j,k} > \lambda_j  \}|      }
       }{1 + 
      \sum_{j \in J(\theta) }   \frac{  1  }{     |\{ k : \triangle_{j,k} > \lambda_j  \}|      } 
       }.
\end{equation}
\end{lemma}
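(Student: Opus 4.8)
The plan is to differentiate the equality version of the KKT condition \eqref{eq.cond4}--\eqref{eq.cond5} term by term, exploiting the piecewise-linearity established in Lemma \ref{lemma.star_net_optimality_cond}(d). First I would fix the exceptional set $\mathcal{E}$ to be the finite collection of $\theta$-values at which any of the relevant piecewise-linear quantities changes its form: the breakpoints of $\theta \mapsto \lambda_{j,\theta}$ and of $\theta \mapsto \sigma_\theta$, together with the finitely many $\theta$ at which $\eta_j + \theta$ crosses some $\triangle_{j,k}$, at which $\lambda_{j,\theta}$ crosses some $\triangle_{j,k}$, or at which $\lambda_{j,\theta}=0$. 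Each such event occurs at finitely many $\theta$ because the functions involved are piecewise linear, so $\mathcal{E}$ is finite. Off $\mathcal{E}$, within a single linear piece, the index sets $\{k : \triangle_{j,k} > \eta_j + \theta\}$ and $\{k : \triangle_{j,k} > \lambda_{j,\theta}\}$, the set of active indices $\{j : \lambda_{j,\theta} > 0\}$, and the set $J(\theta)$ are all locally constant.

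For each $j$ with $\lambda_{j,\theta} > 0$, equality holds in \eqref{eq.cond4}, and I would split the right-hand sum according to the three regimes for $\triangle_{j,k}$. By Lemma \ref{lemma.star_net_optimality_cond}(c), for $\theta > 0$ we have $\eta_j + \theta > \lambda_{j,\theta}$, so a summand is nonzero exactly when $\triangle_{j,k} > \lambda_{j,\theta}$, and among those summands the value is $\eta_j + \theta - \lambda_{j,\theta}$ when $\triangle_{j,k} > \eta_j + \theta$ and $\triangle_{j,k} - \lambda_{j,\theta}$ when $\lambda_{j,\theta} < \triangle_{j,k} \le \eta_j + \theta$. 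Writing $M_j = |\{k : \triangle_{j,k} > \lambda_{j,\theta}\}|$ and $N_j = |\{k : \triangle_{j,k} > \eta_j + \theta\}|$, and taking right-hand derivatives off $\mathcal{E}$, gives $\frac{d^+\sigma_\theta}{d\theta} = N_j\left(1 - \frac{d^+\lambda_{j,\theta}}{d\theta}\right) - (M_j - N_j)\frac{d^+\lambda_{j,\theta}}{d\theta} = N_j - M_j \frac{d^+\lambda_{j,\theta}}{d\theta}$, whence $\frac{d^+\lambda_{j,\theta}}{d\theta} = (N_j - \frac{d^+\sigma_\theta}{d\theta})/M_j$. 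Here $M_j \ge 1$ by Lemma \ref{lemma.star_net_optimality_cond}(b), and $N_j \le M_j$ since $\eta_j + \theta > \lambda_{j,\theta}$ by part (c).

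To close the system I would use $\sigma_\theta = \sum_{j=1}^J \lambda_{j,\theta}$, so that $\frac{d^+\sigma_\theta}{d\theta} = \sum_{j \in J(\theta)} \frac{d^+\lambda_{j,\theta}}{d\theta}$, the indices outside $J(\theta)$ contributing zero by definition of $J(\theta)$. Each $j \in J(\theta)$ is active on the current piece, since a vanishing value would force a vanishing right-derivative; thus the per-$j$ relation applies, and substituting yields $\frac{d^+\sigma_\theta}{d\theta} = \sum_{j \in J(\theta)} \frac{N_j - d^+\sigma_\theta/d\theta}{M_j}$. Solving this single linear equation for $\frac{d^+\sigma_\theta}{d\theta}$ produces exactly \eqref{eq.Dsigma}.

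The main obstacle I anticipate is not the differentiation itself but the bookkeeping needed to justify the term-by-term derivative, i.e. specifying $\mathcal{E}$ so that all the relevant index sets are genuinely constant on each complementary interval and so that every $j \in J(\theta)$ is active there. Care is also required at the boundary between the $\triangle_{j,k} > \eta_j + \theta$ and $\triangle_{j,k} \le \eta_j + \theta$ regimes, where the per-term slope jumps between $1 - \frac{d^+\lambda_{j,\theta}}{d\theta}$ and $-\frac{d^+\lambda_{j,\theta}}{d\theta}$, and at the threshold $\triangle_{j,k} = \lambda_{j,\theta}$; placing these crossing points into $\mathcal{E}$ removes the ambiguity, and Lemma \ref{lemma.star_net_optimality_cond}(c) is precisely what rules out the degenerate case in which the whole right-hand side of \eqref{eq.cond4} collapses to zero.
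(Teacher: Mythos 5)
Your proposal is correct and takes essentially the same route as the paper: differentiate the equality form of \eqref{eq.cond4} for each $j\in J(\theta)$ to get $\frac{d^+\lambda_{j,\theta}}{d\theta}=(N_j-\frac{d^+\sigma_\theta}{d\theta})/M_j$, sum over $J(\theta)$ using $\sigma_\theta=\sum_j\lambda_{j,\theta}$, and solve the resulting linear equation, with a finite exceptional set absorbing the breakpoints and threshold crossings. The paper merely packages your regime-by-regime count through the auxiliary function $\chi_j$ and its one-sided partial derivatives \eqref{eq.Dtheta}--\eqref{eq.DlambdaLeft}, and its set ${\cal E}=\cup_j{\cal E}_j$ plays exactly the role of your $\mathcal{E}$ (you should state the $\lambda_{j,\theta}=0$ clause as the finitely many breakpoints where $\lambda_{j,\theta}=0$ with nonzero right-derivative, so that $\mathcal{E}$ stays finite).
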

\begin{proof}
For $1\leq j \leq J,$   define the function $\chi_j$ by:
$$
\chi_j(\theta,\lambda_j) \stackrel{\triangle}{=}
 \sum_{k=1}^{n_j}  \max\{\min\{\eta_j + \theta,\triangle_{j,k} \} - \lambda_j , 0 \}.
$$
A geometric interpretation is that $\chi_j(\theta,\lambda_j)$ is the area of the shaded region
shown in Fig.  \ref{fig.chi_area}.
\begin{figure}[htb] 
\post{chi_area}{8}
\caption{The area of the shaded region is $\chi_j(\theta,\lambda_j).$   Figure assumes
ordering:  $\triangle_{j,1} \geq \triangle_{j.2} \geq \cdots  \geq \triangle_{j,n_j}.$   \label{fig.chi_area}  }
\end{figure}
Note that  $\chi_j(\theta,\lambda_j)$ is linear within each connected region of the $(\theta, \lambda_j)$ plane
when said plane is partitioned by the grid
\begin{eqnarray*}
G_j=\{ (\theta, \lambda_j) :  \eta_j + \theta  \in \{ \triangle_{j,1}, \ldots , \triangle_{j,n_j}\}  `\mbox{or}\\~~~~~~ ~ \lambda_j \in 
\{ \triangle_{j,1}, \ldots , \triangle_{j,n_j}\}  \}.
\end{eqnarray*}
 Let $D^+_\theta \chi_j$ denote the right-hand partial derivative
of $\chi_j$ with respect to its first argument, let $D^+_\lambda \chi_j$ denote the right-hand partial
derivative of $\chi_j$ with respect to its second argument, and let
 $D^-_\lambda \chi_j$ denote the left-hand partial
derivative of $\chi_j$ with respect to its second argument.
Then over the region
$\{ (\theta, \lambda_j) : \theta \geq  0,    0 \leq \lambda_j < \eta_j + \theta\},$
\begin{eqnarray}
D^+_\theta \chi_j (\theta,\lambda_j) & = &   |\{ k : \triangle_{j,k} > \eta_j  + \theta \}|   \label{eq.Dtheta} \\
D^+_\lambda \chi_j (\theta,\lambda_j)  & = &    |\{ k : \triangle_{j,k} > \lambda_j  \}|   \label{eq.Dlambda}
\end{eqnarray}
and over the region
$\{ (\theta, \lambda_j) : \theta \geq 0,    0 \leq  \lambda_j < \eta_j + \theta\},$
\begin{eqnarray}
D^-_\lambda \chi_j (\theta,\lambda_j)  & = &    |\{ k : \triangle_{j,k} \geq \lambda_j  \}| .  \label{eq.DlambdaLeft}
\end{eqnarray}

For $j \in J(\theta),$  \eqref{eq.cond4} holds with equality at all $\theta'$ with $\theta' > \theta$ and $\theta'-\theta$
sufficiently small,
so differentiating each side of  \eqref{eq.cond4} yields
\begin{equation}  \label{eq.dsigma_temp}
\frac{d^+ \sigma_{\theta}}{d\theta}  =  D^+_\theta \chi_j (\theta,\lambda_{j,\theta}) 
+  D^{\pm}_\lambda \chi_j (\theta,\lambda_{j,\theta}) \frac{d^+\lambda_{i,\theta}}{d\theta}
\end{equation}
where
$$
D^{\pm}_\theta \chi_j (\theta,\lambda_{j,\theta})   =\left\{ \begin{array}{cl}
D^+_\theta \chi_j (\theta,\lambda_{j,\theta})   & \mbox{if}~ \frac{d^+\lambda_{i,\theta}}{d\theta}  \geq 0   \\
D^-_\theta \chi_j (\theta,\lambda_{j,\theta})   & \mbox{if}~ \frac{d^+\lambda_{i,\theta}}{d\theta}  <  0   .
\end{array} \right.
$$
Let ${\cal E}_j$ be the finite set defined by
$$
{\cal E}_j   =\left \{ \theta \geq 0 :   \frac{d^+\lambda_{i,\theta}}{d\theta}  <  0 ~\mbox{and}~ \lambda_{j,\theta} \in 
 \{ \triangle_{j,1}, \ldots , \triangle_{j,n_j}\}\right\}. 
 $$
 Since $D^+_\lambda \chi_j (\theta,\lambda_j)  = D^-_\lambda \chi_j (\theta,\lambda_j) $
 unless $\lambda_{j,\theta} \in  \{ \triangle_{j,1}, \ldots , \triangle_{j,n_j}\},$
 $$
 D^{\pm}_\lambda \chi_j (\theta,\lambda_j)  \frac{d^+\lambda_{i,\theta}}{d\theta}  
  = D^+_\lambda \chi_j (\theta,\lambda_j)  \frac{d^+\lambda_{i,\theta}}{d\theta} 
~~~~  \mbox{if}~\theta \not\in {\cal E}_j.
 $$
 Hence, \eqref{eq.dsigma_temp} implies
 \begin{equation}  \label{eq.dsigma}
\frac{d^+ \sigma_{\theta}}{d\theta}  =  D^+_\theta \chi_j (\theta,\lambda_{j,\theta}) 
+  D^+_\lambda \chi_j (\theta,\lambda_{j,\theta}) \frac{d^+\lambda_{i,\theta}}{d\theta}
~~~~  \mbox{if}~\theta \not\in {\cal E}_j.
\end{equation}
 
Dividing each side of \eqref{eq.dsigma} by the nonzero quantity $D^+_\lambda \chi_j (\theta,\lambda_{j,\theta})$
and rearranging terms yields
\begin{equation}  \label{eq.dsigma2}
\frac{   \frac{d^+\sigma_{\theta}}{d\theta}    -     D^+_\theta \chi_j (\theta,\lambda_{j,\theta})     }
{  D^+_\lambda \chi_j (\theta,\lambda_{j,\theta})   }  = \frac{d^+ \lambda_{j,\theta}}{d\theta}
~~~~  \mbox{if}~\theta \not\in {\cal E}_j.
\end{equation}
Letting ${\cal E}=\cup_j {\cal E}_j$ and
summing each side of \eqref{eq.dsigma2} over $j\in J(\theta)$ yields
\begin{eqnarray}
\lefteqn{   \frac{d^+\sigma_{\theta}}{d\theta}   \left\{  \sum_{j \in J(\theta) } \frac{1}
  {  D^+_\lambda \chi_j (\theta,\lambda_{j,\theta})   }  \nonumber   \right\}    \nonumber } \\
  &-&
   \sum_{j \in J(\theta) } 
    \frac{  D^+_\theta \chi_j (\theta,\lambda_{j,\theta})  }
   {  D^+_\lambda \chi_j (\theta,\lambda_{j,\theta})   }    \nonumber   \\
& = &   \frac{d^+\sigma_{\theta}}{d\theta}
~~~~  \mbox{if}~\theta \not\in {\cal E}.     \label{eq.dsigma3}
\end{eqnarray}
Solving \eqref{eq.dsigma3} for $ \frac{d^+ \sigma_{\theta}}{d\theta} $  and using the expressions
\eqref{eq.Dtheta} and \eqref{eq.Dlambda} for the derivatives yields \eqref{eq.Dsigma} for $\theta \geq 0$
with $\theta \not\in {\cal E}.$
Lemma \ref{lemma.Dsigma} is proved.
\end{proof}

Lemma \ref{lemma.Dsigma} is used to establish the following lemma.   Since $\widetilde{p}_{\theta,0}=v_0+\sigma_{\theta},$
the condition  $\frac{d^+ \widetilde{p}_{0,\theta} }{d \theta } \leq 1$ for all $\theta >  0$ such that $\widetilde{p}_{0,\theta} \leq v_0 + \theta,$
is equivalent to  $\frac{d^+ \sigma_{\theta}}{d\theta} \leq 1$ for all  $\theta >  0$ such that $\sigma_{\theta} \leq  \theta. $
Therefore, the following lemma completes the proof of Proposition~\ref{prop.marg_core}
.
\begin{lemma}  \label{lemma.sigma_prime_bnd}
If  $\theta > 0$ and $\sigma_{\theta}  \leq \theta,$   then $\frac{d^+\sigma_{\theta}}{d\theta} \leq 1.$
\end{lemma}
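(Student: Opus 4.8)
The plan is to work entirely with the explicit formula \eqref{eq.Dsigma} from Lemma \ref{lemma.Dsigma}, which holds for $\theta\not\in\mathcal E$. Write $a_j = |\{k:\triangle_{j,k}>\eta_j+\theta\}|$ and $d_j=|\{k:\triangle_{j,k}>\lambda_{j,\theta}\}|$, so that \eqref{eq.Dsigma} reads $\frac{d^+\sigma_\theta}{d\theta}=\big(\sum_{j\in J(\theta)}a_j/d_j\big)/\big(1+\sum_{j\in J(\theta)}1/d_j\big)$. Since the denominator is positive, the desired bound $\frac{d^+\sigma_\theta}{d\theta}\le1$ is equivalent to numerator $\le$ denominator, i.e. to the single inequality $\sum_{j\in J(\theta)}\frac{a_j-1}{d_j}\le1$. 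The whole proof reduces to establishing this.

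The key step will be a per-index estimate, namely $(a_j-1)\,\sigma_\theta\le d_j\,\lambda_{j,\theta}$ for every $j\in J(\theta)$. To obtain it I would start from the fact (used already in the proof of Lemma \ref{lemma.Dsigma}) that for $j\in J(\theta)$ equation \eqref{eq.cond4} holds with equality at all $\theta'$ slightly larger than $\theta$, hence, by continuity, with equality at $\theta$ itself. In the right-hand side of \eqref{eq.cond4} every nonzero summand $\min\{\eta_j+\theta,\triangle_{j,k}\}-\lambda_{j,\theta}$ is at most $\eta_j+\theta-\lambda_{j,\theta}$ and there are exactly $d_j$ of them (using $\eta_j+\theta>\lambda_{j,\theta}$ from Lemma \ref{lemma.star_net_optimality_cond}(c)), so the equality yields the upper bound $\sigma_\theta+\eta_j\le d_j(\eta_j+\theta-\lambda_{j,\theta})$. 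Rearranging gives $d_j\lambda_{j,\theta}\le d_j(\eta_j+\theta)-\sigma_\theta-\eta_j$, so it suffices to check $a_j\sigma_\theta\le d_j(\eta_j+\theta)-\eta_j=\eta_j(d_j-1)+d_j\theta$. Now Lemma \ref{lemma.star_net_optimality_cond}(c) also gives $a_j\le d_j$, Lemma \ref{lemma.star_net_optimality_cond}(b) gives $d_j\ge1$, $\eta_j\ge0$ by construction, and the hypothesis $\sigma_\theta\le\theta$ is exactly what lets me write $a_j\sigma_\theta\le a_j\theta\le d_j\theta\le\eta_j(d_j-1)+d_j\theta$. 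This is where the hypothesis $\sigma_\theta\le\theta$ is consumed, and it is the only place it is needed.

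With the per-index estimate in hand, I would sum it over $j\in J(\theta)$ to get $\sigma_\theta\sum_{j\in J(\theta)}\frac{a_j-1}{d_j}\le\sum_{j\in J(\theta)}\lambda_{j,\theta}\le\sigma_\theta$. If $\sigma_\theta>0$, dividing by $\sigma_\theta$ gives the reduced inequality $\sum_{j\in J(\theta)}\frac{a_j-1}{d_j}\le1$ and hence the claim. The remaining corner is $\sigma_\theta=0$: then every $\lambda_{j,\theta}=0$, and feeding $\lambda_{j,\theta}=0$ into the tight equation \eqref{eq.cond4} forces $\eta_j\ge a_j(\eta_j+\theta)$, whence $a_j=0$ for each $j\in J(\theta)$ (as $\theta>0$); the numerator of \eqref{eq.Dsigma} then vanishes and $\frac{d^+\sigma_\theta}{d\theta}=0\le1$. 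Finally, since $\sigma_\theta$ is piecewise linear, the bound on the slope of each linear piece is read off at non-exceptional points $\theta\not\in\mathcal E$, so the finite set $\mathcal E$ causes no difficulty.

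I expect the main obstacle to be the per-index inequality $(a_j-1)\sigma_\theta\le d_j\lambda_{j,\theta}$: one must convert the tight core constraints \eqref{eq.cond4}, \eqref{eq.cond5} into a clean upper bound on $d_j\lambda_{j,\theta}$ and then see precisely how $a_j\le d_j$, $d_j\ge1$, $\eta_j\ge0$ and the standing hypothesis $\sigma_\theta\le\theta$ combine so that the slack term $\eta_j(d_j-1)$ absorbs the deficit. Identifying $\sum_{j\in J(\theta)}\frac{a_j-1}{d_j}\le1$ as the right reformulation, and anticipating the $\sigma_\theta=0$ degeneracy, are the other points that require care.
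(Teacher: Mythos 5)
Your overall architecture is sound and, if the key estimate held, would give a somewhat cleaner route than the paper's: reducing $\frac{d^+\sigma_\theta}{d\theta}\le 1$ to $\sum_{j\in J(\theta)}\frac{a_j-1}{d_j}\le 1$ and then to the per-index bound $(a_j-1)\sigma_\theta\le d_j\lambda_{j,\theta}$ is a correct reduction, and your handling of the case $\sigma_\theta=0$ and of the exceptional set $\mathcal{E}$ is fine. The problem is that your derivation of the per-index bound does not go through: the implication in the sentence ``Rearranging gives $d_j\lambda_{j,\theta}\le d_j(\eta_j+\theta)-\sigma_\theta-\eta_j$, so it suffices to check $a_j\sigma_\theta\le d_j(\eta_j+\theta)-\eta_j$'' points the wrong way. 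You have bounded the quantity $d_j\lambda_{j,\theta}$ that you need to be \emph{large} from \emph{above}, and then shown that $a_j\sigma_\theta$ also lies below the same ceiling $d_j(\eta_j+\theta)-\eta_j$; two numbers both lying below a common bound tells you nothing about their relative order. A symptom of the gap is that your final chain $a_j\sigma_\theta\le a_j\theta\le d_j\theta\le \eta_j(d_j-1)+d_j\theta$ uses only $a_j\le d_j$, $\sigma_\theta\le\theta$ and $\eta_j\ge 0$ and never involves $\lambda_{j,\theta}$, so it cannot by itself establish an inequality whose right-hand side is $d_j\lambda_{j,\theta}$.

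The missing ingredient is a Markov-type \emph{lower} bound on the right-hand side of \eqref{eq.cond4} in terms of $a_j$ --- the paper's inequality \eqref{eq.iiiccc} --- which is the real content of the lemma. Each of the $a_j$ indices $k$ with $\triangle_{j,k}>\eta_j+\theta$ contributes exactly $\eta_j+\theta-\lambda_{j,\theta}>0$ (positive by Lemma \ref{lemma.star_net_optimality_cond}(c)) to $\chi_j(\theta,\lambda_{j,\theta})$, so $a_j(\eta_j+\theta-\lambda_{j,\theta})\le\chi_j(\theta,\lambda_{j,\theta})=\sigma_\theta+\eta_j\le\theta+\eta_j$, which gives $a_j-1\le\frac{\lambda_{j,\theta}}{\eta_j+\theta-\lambda_{j,\theta}}$; this is where the hypothesis $\sigma_\theta\le\theta$ is actually consumed. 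Your upper bound on the summands does yield the useful companion inequality $d_j(\eta_j+\theta-\lambda_{j,\theta})\ge\chi_j(\theta,\lambda_{j,\theta})=\sigma_\theta+\eta_j\ge\sigma_\theta$, i.e.\ $d_j\ge\frac{\sigma_\theta}{\eta_j+\theta-\lambda_{j,\theta}}$. Multiplying these two displayed facts gives $(a_j-1)\sigma_\theta\le d_j\lambda_{j,\theta}$, so your per-index estimate is in fact true and your plan can be completed --- but only by adding the lower bound \eqref{eq.iiiccc}, which is precisely the step the paper's proof is built around (the paper then finishes differently, passing to the set $\widehat{J}(\theta)$ and rescaling terms of the ratio in \eqref{eq.getting_there} rather than summing a per-index inequality).
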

We remark that the condition $\theta > 0$ is needed in Lemma \ref{lemma.sigma_prime_bnd}.
\begin{proof}
Since $\sigma_{\theta}$ is piecewise linear, by Lemma  \ref{lemma.Dsigma}, it is sufficient to prove that
the right-hand side of  \eqref{eq.Dsigma} is less than or equal to  one for all $\theta > 0$ such that $\sigma_{\theta}\leq \theta.$
Using reasoning similar to that behind the Markov inequality of probability theory,
inspection of Figure \ref{fig.chi_area} yields that for any $j\in J$ and $\theta > 0,$
\begin{equation}  \label{eq.iiiccc}
  |\{ k : \triangle_{j,k} > \eta_j  + \theta \}|    \leq \frac{  \chi_j(\theta, \lambda_{j,\theta }) }{\eta_j + \theta  - \lambda_{j,\theta} }.
\end{equation}
Using \eqref{eq.cond4}  (which is equivalent to $\sigma_{\theta} + \eta_j \geq  \chi_j(\theta, \lambda_{j,\theta })$) in \eqref{eq.iiiccc} and then invoking the assumption that $\sigma_\theta \leq  \theta$ yields
\begin{align}
\left| \{ k : \triangle_{j,k} > \eta_j  + \theta \} \right| 
& \leq \frac{\sigma_\theta + \eta_j }{\eta_j + \theta  - \lambda_{j,\theta}}  \nonumber \\
& \leq \frac{\theta + \eta_j  }{\theta  + \eta_j - \lambda_{j,\theta} } \quad(\mbox{if}~\sigma_\theta \leq \theta). \label{eq.num_bd}
\end{align}

Let $\widehat{J}(\theta) = J(\theta) \cap \{  j : |\{ k : \triangle_{j,k}  > \eta_j + \theta \}|\geq 1 \}.$
Assuming that  $\sigma_\theta \leq \theta,$  we apply \eqref{eq.num_bd}, the fact $\eta_j\geq 0$ for all
$j,$ and the assumption  $\sigma_\theta \leq \theta$ once again to derive:
\begin{eqnarray}
\lefteqn{   \frac{    |\widehat{J}(\theta)|    
       }{1 + 
      \sum_{j \in \widehat{J}(\theta) }   \frac{  1  }{     |\{ k : \triangle_{j,k} > \eta_j + \theta  \}|      } 
       }   \nonumber  } \\
   & \leq &  
    \frac{    |\widehat{J}(\theta)|    
      }{1 +     \sum_{j \in \widehat{J}(\theta) }   \frac{\theta  + \eta_j - \lambda_{j,\theta} }{\theta + \eta_j  } 
      }   \nonumber  \\
   & = & 
      \frac{    |\widehat{J}(\theta)|    
      }{1 +   |\widehat{J}(\theta)|    -     \sum_{j \in \widehat{J}(\theta) }   \frac{ \lambda_{j,\theta} }{\theta + \eta_j  } 
      }    \nonumber      \\   
   & \leq &
         \frac{    |\widehat{J}(\theta)|    
      }{1 +   |\widehat{J}(\theta)|    -     \sum_{j \in \widehat{J}(\theta) }   \frac{ \lambda_{j,\theta} }{\theta } 
      }  \nonumber   \\
   & \leq  &
           \frac{    |\widehat{J}(\theta)|    
      }{1 +   |\widehat{J}(\theta)|    -    \frac{ \sigma_{\theta}  }{\theta } 
      }  \nonumber    \\
  & \leq &  1   ~~~(\mbox{if}~\sigma_{\theta} \leq \theta  ).  \label{eq.Jhat_bnd}
 \end{eqnarray}
 Now \eqref{eq.Jhat_bnd} can be rewritten as
 \begin{equation}  \label{eq.getting_there}
  \frac{  \sum_{j\in  \widehat{J}(\theta) }    1  
       }{1 + 
      \sum_{j \in \widehat{J}(\theta) }   \frac{  1  }{     |\{ k : \triangle_{j,k} > \eta_j + \theta  \}|      } 
       } \leq 1   ~~~(\mbox{if}~\sigma_{\theta} \leq \theta  ).
\end{equation}
If a ratio of sums of positive numbers is less than one, and if some number in the numerator and
a number in the denominator that is smaller than the one in the numerator are both multiplied by
the same factor that is less than one, then the new resulting ratio of sums  is still less than one.
Specifically in this case, take a unit term in the numerator of the ratio in \eqref{eq.getting_there}
and a term $ \frac{  1  }{     |\{ k : \triangle_{j,k} > \eta_j + \theta  \}|      } $ from the denominator
(note that it is less than or equal to one)  and multiply them both by the  ratio  
$  \frac{   |\{ k : \triangle_{j,k} > \eta_j + \theta  \}|     }{   |\{ k : \triangle_{j,k} > \lambda_{\theta,j}  \}|   },$
which is less than or equal to one. Repeat for all $j \in \wh{J}(\theta)$.
The result is:
\begin{equation}   \label{eq.getting_there2}
 \frac{      
    \sum_{j \in \widehat{J}(\theta) }   \frac{   |\{ k : \triangle_{j,k} > \eta_j  + \theta \}|      }{     |\{ k : \triangle_{j,k} > \lambda_j  \}|      }
       }{1 + 
      \sum_{j \in \widehat{J}(\theta) }   \frac{  1  }{     |\{ k : \triangle_{j,k} > \lambda_j  \}|      } 
       }   \leq 1   ~~~(\mbox{if}~\sigma_{\theta} \leq \theta  ).
\end{equation}
Finally, if $\widehat{J}(\theta)$ is replaced by $J(\theta)$ in \eqref{eq.getting_there2}, the numerator
of the ratio is not changed because the additional terms are all zero, whereas the denominator does
not decrease.    In view of \eqref{eq.Dsigma}, this completes the proof of  Lemma \ref{lemma.sigma_prime_bnd}.
\end{proof}

\subsection{Minimum revenue core and the MID} \label{sec:mrc}
It is shown in \cite{DayMilgrom08} that among all core-payment vectors, ones that minimize the seller's revenue -- referred to as the \textit{minimum revenue core} (MRC) vectors -- minimize the sum over the buyers of each buyer's maximum possible gain from unilaterally deviating from bidding his actual value. In general, MRC is a set of payment vectors and not necessarily a unique point. A variant of the quadratic payment rule is selecting the payment vector from MRC which is nearest to the Vikrey payment vector under Euclidean distance \cite{DayCramton11}. We show in this section that our results on the MID for the quadratic payment rule also apply to its MRC variant.

The MRC is the subset of the core with the minimum sum of prices, $\mb{p}^T \mb{1},$ where $\mb{1}$ is the vector of all ones. We shall consider the unique price vector resulting when the QUADRATIC selection rule is used to select a price vector from the MRC,  with reference price vector equal to the Vikrey price vector $\mb{v}_{\theta}.$ The star network setting from Section \ref{sec:qpr} is used. Let $\mb{p}^{MRC}_{\theta}$ denote the resulting price vector in the MRC, for bid $b_0 = v_0 + \theta$ by buyer zero, for $\theta \geq 0.$

Proposition \ref{prop.marg_MRcore} below extends Proposition \ref{prop.marg_core} to the case where the payment vector is taken to be the point in MRC nearest to the Vikrey payment vector. 

\begin{prop}   \label{prop.marg_MRcore}
The price for buyer zero,  $p_{0,\theta}^{MRC},$  is piecewise linear in $\theta$ with slope less than or equal to one for all $\theta \geq 0.$
\end{prop}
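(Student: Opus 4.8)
The plan is to mirror the proof of Proposition~\ref{prop.marg_core} as closely as possible, isolating the effect of the minimum-revenue restriction and of the IR upper bound on buyer zero. As before, write $p_0 = v_0 + s$ and $q_{j,k} = b_{j,k}-p_{j,k}$, so that $q_{j,k}\in[0,\triangle_{j,k}]$ and, using \eqref{eq.Vikrey_drop}, the squared distance to the Vikrey point is $s^2 + \sum_{j,k}(q_{j,k}-d_{j,k})^2$ with $d_{j,k}=\min\{\eta_j+\theta,\triangle_{j,k}\}$. First I would identify the minimum revenue: minimizing $\mb{1}^T\mb{p}$ over the core reduces, after noting that for fixed $p_0$ the cheapest feasible value of $\sum_k p_{j,k}$ is $\max\{\sum_k\underline{b}_{j,k},\,C_j-p_0\}$, to minimizing (up to an additive constant) the function $h(s)=s-\sum_j \min\{\sum_k\triangle_{j,k},\,\eta_j+s\}$ over $s\in[0,\theta]$. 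Since $h'(s)=1-m(s)$ with $m(s)=|\{j:\eta_j+s<\sum_k\triangle_{j,k}\}|$ nonincreasing, $h$ is convex, and its unconstrained minimizing set is a $\theta$-independent interval $[s_-,s_+]$, where $s_\pm$ are the two largest values of $\sum_k\triangle_{j,k}-\eta_j$.

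This suggests the analogue of the expanded-core device: let the \emph{expanded MRC} be the intersection of the expanded core \eqref{eq.sum_con_ex}--\eqref{eq.p0_constraint_ex} with the hyperplane $\mb{1}^T\mb{p}=R^\star_{\mathrm{exp}}$, where $R^\star_{\mathrm{exp}}$ corresponds to $\min_{s\ge 0}h(s)$; this is a fixed polytope not depending on $\theta$. Let $\wt{\mb{p}}^{MRC}_\theta$, with buyer-zero coordinate $v_0+s^\star(\theta)$, be the projection of $\mb{v}_\theta$ onto it. I would then establish the clean identity $p^{MRC}_{0,\theta}=v_0+\min\{\theta,\,s^\star(\theta)\}$ for all $\theta\ge 0$: when $\theta\ge s_-$ the true minimum revenue equals $R^\star_{\mathrm{exp}}$, so the true MRC is the expanded MRC intersected with $\{p_0\le v_0+\theta\}$, and the single-upper-bound clipping fact (that constraint is either inactive, leaving the projection unchanged, or active, forcing $p_0=v_0+\theta$) yields the minimum; when $\theta<s_-$, $h$ is strictly decreasing on $[0,\theta]$, so the minimum revenue directly forces $p_0=v_0+\theta$, which again matches $\min\{\theta,s^\star\}$ since $s^\star\ge s_-$. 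Because $\mb{v}_\theta$ is piecewise linear and the expanded MRC is a fixed polytope, $s^\star$ is piecewise linear, hence so is $p^{MRC}_{0,\theta}$.

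The key structural step is that on the minimum-revenue face the constraints are almost all tight: for every $s\in[s_-,s_+]$ one has $m(s)\le 1$, so at most one coalition $j^\star$ (the one maximizing $\sum_k\triangle_{j,k}-\eta_j$) retains any slack, while every other coalition is pinned at its floor, $p_{j,k}=\underline{b}_{j,k}$. The projection defining $s^\star$ therefore collapses to a problem involving only buyer zero and the single coalition $j^\star$, and its optimality condition for an interior $s^\star$ reads $\eta_{j^\star}+s^\star=\sum_k\max\{\min\{\eta_{j^\star}+\theta,\triangle_{j^\star,k}\}-s^\star,\,0\}$. This is exactly the fixed-point equation \eqref{eq.cond4}--\eqref{eq.cond5} satisfied by $\sigma_\theta=\lambda_{j^\star,\theta}$ in the $J=1$ instance of the quadratic rule built from the data of coalition $j^\star$. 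I would therefore invoke Lemma~\ref{lemma.sigma_prime_bnd} (which holds for any $J$) to conclude $\frac{d^+ s^\star}{d\theta}\le 1$ wherever $s^\star\le\theta$; at the $\theta$-independent boundaries $s^\star\in\{s_-,s_+\}$ the slope is $0$, and on the set where $s^\star>\theta$ one has $p^{MRC}_{0,\theta}=v_0+\theta$ with slope $1$. Combined with the identity above, this gives slope at most one for all $\theta\ge 0$.

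The main obstacle is the structural reduction of the third paragraph: verifying that passing to the minimum-revenue face freezes all but one coalition at its floor price and thereby turns the buyer-zero subproblem into a single-coalition instance governed by the very fixed-point equation already analyzed in Lemma~\ref{lemma.sigma_prime_bnd}. Some care is also needed to confirm that the $\theta$-dependence enters only through the reference point $\mb{v}_\theta$ (via $d_{j^\star,k}$) and through the cap $v_0+\theta$, but not through the expanded MRC polytope itself, so that piecewise linearity and the reuse of the earlier slope bound are both legitimate.
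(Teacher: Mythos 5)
Your proposal is correct, and its central reduction coincides with the paper's: minimizing revenue over the core forces $p_0$ upward until at most one coalition constraint can retain slack, so all buyers $(j,k)$ with $j\neq j^*$ are frozen at $\underline{b}_{j,k}$ and the problem collapses to a projection involving only buyer zero and coalition $j^*$. Your convex function $h$ and its minimizing interval $[s_-,s_+]$ are exactly the paper's quantities $V_j$, $j^*$, $V_{[2]}$ translated by $v_0$ (the paper gets there by an $\epsilon$-exchange argument rather than by convexity of $h$, but the conclusion is the same), and your single-coalition optimality condition is precisely \eqref{eq.exhibitA1}--\eqref{eq.exhibitA2}. The one place you genuinely diverge is the endgame: the paper compares the reduced multiplier $\nu$ with $\sigma_\theta^C$ to obtain the closed-form Lemma \ref{lemma.marg_MRcore}, namely $p_{0,\theta}^{MRC}=\max\{p_{0,\theta}^{C},V_{[2]}\}$ for $b_0\geq V_{[2]}$ and $p_{0,\theta}^{MRC}=b_0$ otherwise, and then simply cites Proposition \ref{prop.marg_core}; you instead note that \eqref{eq.exhibitA1}--\eqref{eq.exhibitA2} is the $J=1$ instance of \eqref{eq.cond4}--\eqref{eq.cond5} and re-invoke Lemma \ref{lemma.sigma_prime_bnd} on that instance. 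Both endgames are valid: the paper's yields an explicit relation between the MRC price and the core price, while yours avoids the comparison $\nu\leq\sigma_\theta^C$ at the cost of re-running the slope bound on the reduced problem. The only detail worth tightening is the lower clipping: since your reduced sum constraint is an equality, its multiplier may be negative, so you should state $s^\star=\max\{\nu,\max\{s_-,0\}\}$ explicitly; your remark that the slope is zero at the $\theta$-independent boundaries already covers this, so it is a matter of bookkeeping rather than a gap.
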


In the remainder of this section, a superscript $C$ on a variable denotes that it is a variable
defined in the analysis of projection onto the core, as opposed to projection onto the MRC.
The proposition is proved below by deriving an expression for $p_{0,\theta}^{MRC}$
in terms of $p_{0,\theta}^{C},$  and applying Proposition \ref{prop.marg_core}.
Recall that for $\theta$ (equivalently, $b_0$) fixed, the core constraints are given
by \eqref{eq.sum_cons}-\eqref{eq.p0_constraint}.
Since $p_0$ appears in the sum constraint \eqref{eq.sum_cons} for each $j$, the following is true.
If  a vector $\mb{p}$ is in the core with $p_0<b_0$, and two or more of the sum constraints hold with
strict inequality, say for $j$ and $j'$,  and if for some $k$ and $k'$, 
$p_{j,k} > \underline{b}_{j,k}$ and  $p_{j',k'} > \underline{b}_{j',k'},$  there
is another vector in the core with  smaller revenue.   Such a vector can be obtained by
increasing $p_0$ by some  sufficiently small $\epsilon > 0$ and decreasing both
$p_{j,k}$ and $p_{j',k'}$ by the same $\epsilon.$

Therefore, the MRC is the subset of the core such that either $p_0=b_0,$  or
$p_0$ is so large that at most one of the sum constraints is violated if
$p_{j,k}= \underline{b}_{j,k}$ for all $(j,k).$     To put this another way, for
$1\leq j \leq J,$  let $V_{j}  = \max\{   C_j - \sum_{k} \underline{b}_{j,k}   , 0 \}$,
which is the smallest nonnegative value such that if $p_0 \geq V_{j},$
then the $j^{th}$ sum constraint for the core is satisfied if $p_{j,k}=\underline{b}_{j,k}$
for $1\leq k \leq n_j.$
Let $j^*$ denote a value of $j$ that maximizes $V_j,$  and let $V_{[2]}=\max\{V_j:j\neq j^*\}$ 
(set $V_{[2]}=-\infty$ if $J=1$).
Then the MRC is precisely the subset of the core satisfying the additional constraint
$
p_0 \geq \min\{b_0,  V_{[2]} \}.
$
That is, $p_0$ either is equal to $b_0$  (which can't be exceeded due to the IR constraint
of buyer zero)  or is so large that at most one of the sum constraints is violated when all buyers $(j,k)$ bid
$\underline{b}_{j,k}.$   Since it is also a requirement that $p_0\leq b_0,$   it follows that
$p_{0}^{MRC}=b_0$ for  $v_0  \leq b_0  \leq V_{[2]}.$

To cover the remaining possibility, suppose  $\theta$ is such that $b_0 > V_{[2]}.$
 Then for any vector $\mb{p}$ in the MRC,
 $p_0 \geq V_{[2]},$   and therefore $p_{j,k,\theta}= v_{j,k,\theta}=  \underline{b}_{j,k,\theta}$
for all  $(j,k)$ with $j\neq j^*.$   The only coordinates of $\mb{p}_{\theta}$
that remain to be determined are $p_{0,\theta}$ and $p_{j^*,k,\theta}$ for $1\leq k \leq n_{j^*}.$
This reduces to a projection of the $n_{j^*}+1$ dimensional reference
vector $(v_0, (v_{j^*,k,\theta} : 1\leq k \leq n_{j^*}))$ onto the set of
vectors $(p_0, (p_{j^*,k} : 1\leq k \leq n_{j^*}))$ such that $p_0 \in [\underline{b}_0.b_0],$
$p_{j^*,k} \in [\underline{b}_{j^*,k},b_{j^*,k}]$ and
$C_{j^*} \leq p_0 + \sum_{k=1}^{n_{j^*}} p_{j^*,k}.$ 
Let $\nu$ denote the Lagrange multiplier for this sum constraint.
Proceeding as in the previous section yields that
$p_{0,\theta}^{MRC}= \min\{ \max\{v_0+\nu, V_{[2]} \},  b_0\},$ where $\nu$ is determined by the
conditions
\begin{eqnarray}   \label{eq.exhibitA1}
\nu +\eta_{j^*}  \geq  \sum_{k=1}^{n_{j^*}}   \max\{  \min \{\eta_{j^*}+\theta , \triangle_{j^*,k} \}
- \nu  ,   0  \},  \\
\mbox{with equality if}~\nu> 0.   \label{eq.exhibitA2}
\end{eqnarray}
Writing $\sigma_{\theta}^C$ for the variable $\sigma_{\theta}$ in the previous section, and applying the
fact $\lambda_{j^*}^C \leq \sigma_\theta^C$ to \eqref{eq.cond4}  yields
\begin{eqnarray}   \label{eq.exhibitb}
\sigma_{\theta}^C  +\eta_{j^*}  \geq  \sum_{k=1}^{n_{j^*}}   \max\{  \min \{\eta_{j^*}+\theta , \triangle_{j^*,k} \}
- \sigma_{\theta}^C    ,   0  \}   \label{eq.exhibitB}
\end{eqnarray}
Comparing \eqref{eq.exhibitA1} and  \eqref{eq.exhibitA2} to \eqref{eq.exhibitB} shows that
$\nu \leq \sigma_{\theta}^C.$    Thus, on one hand,
if $p_{0,\theta}^C \leq V_{[2]}$, then $v_0+\nu \leq v_o+\sigma_{\theta}^C  \leq V_{[2]},$
so $p_{0,\theta}^{MRC}=V_{[2]}=\max\{p_{0,\theta}^C, V_{[2]}\}.$
On the other hand, if  $p_{0,\theta}^C >  V_{[2]}$,  then $v_0 + \sigma_{\theta}^C  > V_{[2]}$
so for any $j$ the left-hand side of \eqref{eq.cond4} is strictly greater than $V_{[2]}-v_0 + \eta_j ,$
which if $j\neq j^*$ is greater than or equal to $\sum_{k=1}^{n_{j^*}} \triangle_{j,k},$  which is greater
than or equal to the right-hand side of  \eqref{eq.cond4}.  Hence, if $j\neq j^*,$  equality cannot hold
in  \eqref{eq.cond4} and $\lambda_{\theta,j}^C=0$ for $j\neq j^*.$   Therefore, 
$\sigma_{\theta}^C$ is equal to $\lambda_{j^*,\theta}^C$, and it satisfies the same conditions
\eqref{eq.exhibitA1} and  \eqref{eq.exhibitA2} as $\eta$ and is hence equal to $\eta$,
implying that $p_{0,\theta}^{MRC}=p_{0,\theta}^C=\max\{p_{0,\theta}^C, V_{[2]}\}.$  Thus,
whenever  $b_0 > V_{[2]},$  $p_{0,\theta}^{MRC}=\max\{p_{0,\theta}^C, V_{[2]}\}.$

The previous two paragraphs imply the following:
\begin{lemma}   \label{lemma.marg_MRcore}
Let $p_{0,\theta}^C$ denote the price for buyer zero for projection of $\mb{v}_{\theta}$  onto the core, and
$p_{0,\theta}^{MRC}$ denote the price for buyer zero for projection of $\mb{v}_{\theta}$   onto the MRC.
Then
$$
p_{0,\theta}^{MRC}  =\left\{ \begin{array}{cl}
 b_0 & \mbox{if}~v_0 \leq b_0 \leq  V_{[2]}    \\
 \max\{  p_{0,\theta}^{C}, V_{[2]}  \}  &   \mbox{if}~b_0 \geq  V_{[2] }
\end{array} \right.
$$
\end{lemma}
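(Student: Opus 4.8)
The plan is to first identify the MRC as the core intersected with a single additional lower bound on $p_0$, and then to read off the two-case formula by comparing the MRC projection with the core projection. The key structural fact, forced by the appearance of $p_0$ in every sum constraint \eqref{eq.sum_cons}, is an exchange argument: if a core vector has $p_0<b_0$ and two distinct groups $j\neq j'$ each contain a single-item price strictly above its lower bound (say $p_{j,k}>\underline{b}_{j,k}$ and $p_{j',k'}>\underline{b}_{j',k'}$) with the corresponding sum constraints slack, then raising $p_0$ by a small $\epsilon>0$ while lowering $p_{j,k}$ and $p_{j',k'}$ each by $\epsilon$ remains in the core yet strictly reduces the revenue $\mb{p}^T\mb{1}$. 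Hence at any MRC point either $p_0=b_0$, or $p_0$ is large enough that at most one sum constraint could fail were every $p_{j,k}$ pushed down to $\underline{b}_{j,k}$. Introducing $V_j=\max\{C_j-\sum_k\underline{b}_{j,k},0\}$, letting $j^*$ attain $\max_j V_j$, and setting $V_{[2]}=\max\{V_j:j\neq j^*\}$, this characterization is exactly the extra constraint $p_0\geq\min\{b_0,V_{[2]}\}$.

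With this in hand, the easy range $v_0\leq b_0\leq V_{[2]}$ is immediate: there $\min\{b_0,V_{[2]}\}=b_0$, so the added constraint $p_0\geq b_0$ combined with the individual-rationality bound $p_0\leq b_0$ from \eqref{eq.p0_constraint} pins $p_0=b_0$ throughout the MRC, giving $p_{0,\theta}^{MRC}=b_0$ and matching the first line of the claim.

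The substantive case is $b_0\geq V_{[2]}$. Every MRC vector then satisfies $p_0\geq V_{[2]}\geq V_j$ for all $j\neq j^*$, which by the definition of $V_j$ freezes $p_{j,k,\theta}=v_{j,k,\theta}=\underline{b}_{j,k}$ for every $(j,k)$ with $j\neq j^*$. The projection therefore collapses to projecting $(v_0,(v_{j^*,k,\theta}))$ onto the box $[\underline{b}_0,b_0]\times\prod_k[\underline{b}_{j^*,k},b_{j^*,k}]$ intersected with the single sum constraint $C_{j^*}\leq p_0+\sum_k p_{j^*,k}$. Rerunning the KKT analysis of Section \ref{sec:qpr} for this one constraint yields $p_{0,\theta}^{MRC}=\min\{\max\{v_0+\nu,V_{[2]}\},b_0\}$, where the lone multiplier $\nu$ solves \eqref{eq.exhibitA1}-\eqref{eq.exhibitA2}. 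Tying $\nu$ to the full-core quantity $\sigma_\theta^C$, one applies $\lambda_{j^*,\theta}^C\leq\sigma_\theta^C$ to \eqref{eq.cond4} to obtain \eqref{eq.exhibitB}, and comparison with \eqref{eq.exhibitA1}-\eqref{eq.exhibitA2} gives $\nu\leq\sigma_\theta^C$. Splitting on whether $p_{0,\theta}^C=v_0+\sigma_\theta^C$ lies at or below $V_{[2]}$ (where $\max\{v_0+\nu,V_{[2]}\}=V_{[2]}$) or strictly above it then produces $p_{0,\theta}^{MRC}=\max\{p_{0,\theta}^C,V_{[2]}\}$ in both subcases.

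I expect the delicate step to be the subcase $p_{0,\theta}^C>V_{[2]}$, where I must upgrade $\nu\leq\sigma_\theta^C$ to $\nu=\sigma_\theta^C$. The argument is that $v_0+\sigma_\theta^C>V_{[2]}$ makes the left-hand side of \eqref{eq.cond4} strictly exceed its right-hand side for every $j\neq j^*$, so those core constraints are slack and their multipliers vanish; consequently $\sigma_\theta^C=\lambda_{j^*,\theta}^C$ satisfies the very same conditions \eqref{eq.exhibitA1}-\eqref{eq.exhibitA2} that define $\nu$, forcing $\nu=\sigma_\theta^C$ and hence $p_{0,\theta}^{MRC}=p_{0,\theta}^C$. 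Verifying that the reduced projection faithfully inherits the optimality conditions of the earlier analysis, and keeping the bookkeeping between group $j^*$ and the frozen groups consistent, is where the care is needed.
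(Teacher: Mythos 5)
Your proposal is correct and follows essentially the same route as the paper: the same $\epsilon$-exchange argument to characterize the MRC as the core plus the single constraint $p_0 \geq \min\{b_0, V_{[2]}\}$, the same reduction to a one-constraint projection with multiplier $\nu$ when $b_0 > V_{[2]}$, and the same comparison $\nu \leq \sigma_\theta^C$ with the case split on whether $p_{0,\theta}^C$ exceeds $V_{[2]}$. The step you flag as delicate (upgrading $\nu \leq \sigma_\theta^C$ to equality when $p_{0,\theta}^C > V_{[2]}$ by showing the other multipliers vanish) is exactly how the paper closes the argument.
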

Proposition   \ref{prop.marg_MRcore} is a corollary of Lemma   \ref{lemma.marg_MRcore}
and Proposition   \ref{prop.marg_core}.

\section{The MID for General Core Selecting Payment Rules} \label{sec:mid-general}
We now obtain a lower bound on the worst case MID for any core-selecting payment rule, where the worst case is over
market environments.  It is assumed that the winner determination rule is efficient. 
This bound quantifies the loss in the incentives for truthful bidding if core-selecting outcome is imposed as a constraint.
Notice that the result applies to any core-selecting payment rule, and not necessarily the quadratic payment rule, and
not necessarily a payment rule involving a reference price vector such as the Vikrey price vector.

\begin{prop} \label{prop.mid-lowerbd} Let $w$ be an integer with $w \geq 2.$
For any core-selecting payment rule, the worst case MID, over all scenarios for which there are $w$ winners, is at least $1-\frac{1}{w}.$
\end{prop}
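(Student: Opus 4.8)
The plan is to produce, for an \emph{arbitrary} core-selecting rule, one family of $w$-winner environments on which the rule must exhibit MID at least $1-\tfrac1w$ at some bid level; the family (and in particular the identity of the critical buyer) is allowed to depend on the rule, which is all that ``worst case over environments'' requires. The construction is the symmetric market with $w$ items, a winning single-item buyer bidding $H$ for each item, and a single losing buyer demanding the grand bundle $\{1,\dots,w\}$ at price $C$ with $(w-1)H<C<wH$ (for $w=2$, $H=8$, $C=10$ this is exactly the market in the Introduction). First I would verify that here the coalition core constraints collapse to the single grand-bundle inequality $\sum_{i=1}^{w}p_i\ge C$ (the package buyer conflicts with every single-item winner, so it contributes only this one row of $\mathbf A$, while the coalitions $W-i$ give $p_i\ge 0$), so that the core is $\{\,p:\ \sum_i p_i\ge C,\ 0\le p_i\le b_i\,\}$ and the smallest price a single winner $i$ can be charged is its floor $f:=C-(w-1)H\in(0,H)$, attained when all other winners pay their bids.

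The one inequality $\sum_i p_i\ge C$ is what drives the bound, and it holds for \emph{every} core rule. At the symmetric profile (all bids $H$) the rule returns some core vector; since its coordinates sum to at least $C$, some winner $i^{*}$ is charged $p_{i^{*}}\ge C/w$. I would then hold the other $w-1$ bids fixed at $H$ and let $b_{i^{*}}$ vary over $[f,H]$. On this interval buyer $i^{*}$ is still a winner (the $w$ single-item buyers have total value $b_{i^{*}}+(w-1)H\ge C$) and its floor stays at $f$ because the other bids are unchanged, so the rule's price satisfies $p_{i^{*}}(b_{i^{*}})\in[f,\,b_{i^{*}}]$. Consequently $p_{i^{*}}(b_{i^{*}})\to f$ as $b_{i^{*}}\downarrow f$, whereas $p_{i^{*}}(H)\ge C/w$; that is, as the bid climbs from $f$ to $H$ the price climbs from (essentially) $f$ to at least $C/w$.

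The price therefore increases by at least $C/w-f=(w-1)\bigl(H-\tfrac{C}{w}\bigr)$ while the bid increases by $H-f=w\bigl(H-\tfrac{C}{w}\bigr)$, so the secant slope over $[f,H]$ is at least $\tfrac{w-1}{w}=1-\tfrac1w$, independent of $H$ and $C$. Splitting $[f,H]$ into small subintervals, some subinterval must have secant slope at least $1-\tfrac1w$ (upward jumps of the price only help), and shrinking it exhibits a bid $b_{i^{*}}$ at which a small increase raises $p_{i^{*}}$ at rate at least $1-\tfrac1w$: an environment with $w$ winners and MID at least $1-\tfrac1w$. I expect the only delicate point to be this last, purely one-dimensional step for a general rule, whose price function need not be monotone, continuous, or piecewise linear: I must argue that a secant slope of $1-\tfrac1w$ across $[f,H]$ forces some \emph{forward}-difference quotient to be at least $1-\tfrac1w$, and handle the efficiency tie at $b_{i^{*}}=f$ by working on $(f,H]$ and passing to the limit. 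Everything else uses only $\sum_i p_i\ge C$ and the individual-rationality bounds, so it is automatically valid for every core-selecting payment rule.
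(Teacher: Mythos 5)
Your proposal is correct and is essentially the paper's own argument: the paper uses the same market (set $H=1+\delta$ and $C=w$, so your floor $f$ becomes $1-(w-1)\delta$), identifies $i^*$ at the all-$H$ profile via the core sum constraint $\sum_i p_i \geq C$, and compares against the bottom profile where the core constraint together with individual rationality pins the price vector to the bid vector. The only cosmetic difference is that the paper makes the bid increment small by letting $\delta\to 0$ (equivalently, taking $H$ close to $C/w$) instead of subdividing a fixed interval $[f,H]$ and invoking a pigeonhole argument on the telescoping sum of forward differences.
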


\begin{proof}
Select $\delta$ with $0 < \delta \leq \frac{1}{w-1}.$   Consider an auction of 
$w$ items and $w+1$ buyers (i.e., $|M|=w$ and $|N|=w+1$), as follows.  There are $w$ small buyers, with each interested in a distinct item,
and one large buyer, interested in all items.  Consider two scenarios, presented in reverse order. In scenario two, the small buyers each bid $1+\delta;$ the large buyer bids $w;$ the small buyers win. In order for the price vector of the winners  $(p_1, \ldots  , p_w)$ to be in the core, it is necessary that the sum of the prices be at least $w.$ Thus, for some $i^*,$  $p_{i^*}\geq 1.$
In scenario one, suppose buyer $i^*$ bids $1-(N-1)\delta,$  and the bids of the other buyers are the same as in scenario two, i.e. the other small buyers bid $1+\delta$ and the large buyer bids $w.$ The sum of bids of the small buyers is $w$, equal to the bid of the large buyer. Suppose the tie is broken in favor of the small buyers. Due to the IR constraints and the requirement of price vector being in the core of reported bids, the price vector in scenario one must equal the bid vector. In particular, the price paid by buyer $i^*$ in scenario one is $1-(w-1)\delta.$

Observe that in progressing from scenario one to scenario two, buyer $i^*$ increases his bid from $1-(w-1)\delta$ to $1+\delta,$  an increase of $w\delta.$ The bids of the other buyers and the set of winners is the same in the two scenarios. The price of buyer $i^*$ increases from $1-(w-1)\delta$ to at least one, an increase of at least $(w-1)\delta.$   Thus, by increasing his already winning bid, buyer $i^*$ causes his payment to increase by at least $1-\frac{1}{w}$ times the amount of the bid increase.
\end{proof}

\section{Conclusions and Future Work} \label{sec:conclusions}
The marginal incentive to deviate (MID) is a metric to measure the incentive to deviate from truthful bidding if core-selecting outcome is imposed as a constraint in an auction. We obtain lower and upper bounds on the MID.

An immediate direction for future work is to extend our results to general combinatorial auctions. Our focus has mostly been on quadratic payment rule. Analysis of the incentive properties of some other commonly used core-selecting payment rules is needed. For many practical scenarios, strategy-proof behavior is incompatible with other design objectives. A broader research question is how to design approximate strategyproof mechanisms and compare them.


\bibliographystyle{plain}
\bibliography{/Users/brucehajek/Documents/Papers/BIBS/auctions}

\end{document}